\documentclass[journal]{IEEEtran}
\usepackage{amsmath,amssymb,amsthm,amsfonts}
\theoremstyle{definition}
\newtheorem{thm}{Theorem}

\newtheorem{lem}{Lemma}
\newtheorem{cor}{Corollary}

\newtheorem{prop}{Proposition}
\usepackage{bm}
\setlength{\marginparwidth}{2cm}
\usepackage[textsize=footnotesize,textwidth=2cm]{todonotes}  
\usepackage{tikz}
\usepackage{verbatim}
\usepackage{gensymb}
\usepackage{graphicx}
\usepackage{subcaption}
\usepackage{multirow}
\usepackage{etoolbox}
\usepackage{xcolor}
\usepackage{authblk,algpseudocode,algorithm,fancyvrb,enumitem}

\algrenewcommand\textproc{\texttt}

\usepackage[hidelinks]{hyperref}

\usepackage[capitalise]{cleveref}
\crefname{thm}{Theorem}{Theorems}
\crefname{defn}{Definition}{Definitions}
\crefname{lem}{Lemma}{Lemmas} 
\crefname{cor}{Corollary}{Corollaries}
\crefname{conj}{Conjecture}{Conjectures}
\crefname{prop}{Proposition}{Propositions}
\crefname{tbl}{Table}{Tables}
\AtBeginEnvironment{appendices}{\crefalias{section}{appendix}}

\usepackage{orcidlink}

\usepackage{fontawesome}

\begin{document}

\title{An insightful approach to bearings-only tracking in log-polar coordinates}

\author{Athena Helena Xiourouppa\,\orcidlink{0000-0003-0033-744X}, Dmitry Mikhin\,\orcidlink{0009-0002-6228-2958}
Melissa Humphries\,\orcidlink{0000-0002-0473-7611}, and John Maclean\,\orcidlink{0000-0002-5533-0838}}



\maketitle

\begin{abstract}
    The choice of coordinate system in a bearings-only (BO) tracking problem influences the methods used to observe and
    predict the state of a moving target. Modified Polar Coordinates (MPC) and Log-Polar Coordinates (LPC) have some
    advantages over Cartesian coordinates. In this paper, we derive closed-form expressions for the target state prior
    distribution after ownship manoeuvre: the mean, covariance, and higher-order moments in LPC. We explore the use of
    these closed-form expressions in simulation by modifying an existing BO tracker that uses the UKF. Rather
    than propagating sigma points, we directly substitute current values of the mean and covariance into the time update
    equations at the ownship turn. This modified UKF, the CFE-UKF, performs similarly to the pure UKF, verifying the
    closed-form expressions. The closed-form third and fourth central moments indicate non-Gaussianity of the target
    state when the ownship turns. By monitoring these metrics and appropriately initialising relative range error, we
    can achieve a desired output mean estimated range error (MRE). The availability of these higher-order moments
    facilitates other extensions of the tracker not possible with a standard UKF.
\end{abstract}

\begin{IEEEkeywords}
Statistical analysis, bearings-only tracking, target tracking, Modified Polar Coordinates,
Log-Polar Coordinates, Unscented Kalman Filter
\end{IEEEkeywords}

\section{Introduction} \label{sec:intro}
Consider a sensing platform (ownship) tracking a vessel of interest (target) using a passive acoustic
sensor.\footnote{For simplicity, we ignore the potential differences between the ownship coordinates / orientation and
the sensor coordinates / orientation.} Assume energy propagates in simple, horizontal straight lines, so that there is
only one energy path (ray) from target to sensor. The sensor obtains the target bearing by measuring the direction of
target noise arrival. We aim to recover the target's full state, \textit{i.e.,} position and speed, from a sequence of
this data.

Modified Polar Coordinates (MPC) have been established as a stable approach for bearings-only (BO) target tracking. State
update equations in MPC are non-linear; to implement estimation for non-linear systems, existing literature suggests the
use of the Extended Kalman Filter (EKF)~\cite{Aidala:1983:Utilization,Hoelzer:1978:Modified} or Unscented Kalman Filter
(UKF)~\cite{Wang:2010:Algorithm}.

\textit{Log}-Polar Coordinates (LPC) are another viable candidate \cite{Brehard:2006:Closed, Mallick:2011:Angle},
especially for theoretical analysis; using LPC, one can obtain the posterior Cramér-Rao bound (PCRB) for the target
state distribution \cite{Brehard:2006:Closed}. This paper further explores the theoretical properties of the state
distribution in LPC.

We do so in the context of a simple explicit ownship manoeuvre, described in~\cref{sec:Prelims}; broader scenarios can
be approximated as combinations of simple manoeuvres.

The main contribution of this paper is proving that the target state distribution after an ownship manoeuvre can be
described in LPC with closed-form expressions, derived in~\cref{sec:Distribution}. Closed-form expressions allow us to
better understand the relationships between target state variables. We demonstrate the advantage of this for tracking
simulation in~\cref{sec:Applications} by implementing a Kalman Filter that incorporates the closed-form expressions,
CFE-UKF, and comparing its capabilities to a pure UKF. Conclusions are presented in \cref{sec:Conclude}.

\section{Passive Bearings-Only Tracking\label{sec:Prelims}}

\subsection{Straight Line Movement\label{sec:Straight}}
The target's state in MPC~\cite{Hoelzer:1978:Modified} is described as a vector of
\begin{itemize}
    \item absolute bearing $\beta$ that is the angle from the receiver to the target measured from True North,
	\item bearing rate $\dot{\beta}$,
	\item scaled range rate $\dot{\rho} = \dot{r} / r$, where $r$ is range, and
	\item inverse range $s = 1 / r$.
\end{itemize}
Here and below, the upper dot denotes the derivative with respect to time.

\IEEEpubidadjcol

We define the target state at time $t$ as $\bm{x} = [\beta, \dot{\beta}, \dot{\rho}, s]$. If we assume that the ownship
and target do not manoeuvre, the vessel trajectory in MPC is:
\begin{equation}
    \label{eq:MPCTransition}
    \begin{aligned}
    \beta^+ &= \beta + \arctan{\frac{t_{\beta}}{1 + t_{\rho}}}, &
        s^+ &= \frac{s}{ \sqrt{ {(1 + t_{\rho})}^2 + {t_{\beta}}^2 } },\\
        \dot{\beta}^+ &= \frac{\dot{\beta}}{ {(1 + t_{\rho})}^2 + {t_{\beta}}^2 }, &
        \dot{\rho}^+ &= \frac{\dot{\rho} + \Delta t (\dot{\rho}^2 + \dot{\beta}^2) }{ {(1 + t_{\rho})}^2 + {t_{\beta}}^2 }. \\
    \end{aligned}
\end{equation}
Here $t^+$ is some later time, $\Delta t = t^+ - t$, $t_{\beta} = \dot{\beta} \Delta t$ and $t_{\rho} =
\dot{\rho} \Delta t$. The update equations in~\cref{eq:MPCTransition} are non-linear. To use these equations for target state
updates in a sequential estimator, we must apply the EKF, UKF, or other advanced methods of non-linear estimation
(\textit{e.g.},~\cite{Haykin:2001:Kalman}). By taking $\Delta t$ as the update interval of the EKF or UKF,
\cref{eq:MPCTransition} would define the state transition function for the filter.

The bearing transition equation $\beta^+$ in~\cref{eq:MPCTransition} does not contain the initial value of inverse range
$s$. Hence, we cannot determine the range from BO measurements when the ownship and target move straight.

\subsection{Equations of Motion for Manoeuvring Platforms\label{sec:EqMotion}}
Assume that at the initial time $t$ the ownship position and speed in a fixed Cartesian coordinate system are
$x_{\text{O}}$, $y_{\text{O}}$, $v_{x\text{O}}$, and $v_{y\text{O}}$, and the corresponding position and speed at a
later time $t^+$ are $x^+_{\text{O}}$, $y^+_{\text{O}}$, $v^+_{x\text{O}}$, and $v^+_{y\text{O}}$. Define the ownship
displacement vector and speed change vector as
\begin{equation}
\label{eq:CartDisplacement}
\begin{aligned}
\bm{w}_{\text{O}} &= \left[ x^+_{\text{O}} - x_{\text{O}} - v_{x\text{O}} \Delta t, y^+_{\text{O}} - y_{\text{O}} - v_{y\text{O}} \Delta t \right], \\
\dot{\bm{w}}_{\text{O}} &= \left[ v^+_{x\text{O}} - v_{x\text{O}}, v^+_{y\text{O}} - v_{y\text{O}} \right].
\end{aligned}
\end{equation}
Similarly, define $\bm{w}_{\text{T}}$ and $\dot{\bm{w}}_{\text{T}}$ for the target displacement. The overall
acceleration-related displacement and speed change are
\begin{equation}
\label{eq:Accelerated}
\begin{aligned}
\bm{a} &= \bm{w}_{\text{T}} - \bm{w}_{\text{O}}, \\
\dot{\bm{a}} &= \dot{\bm{w}}_{\text{T}} - \dot{\bm{w}}_{\text{O}}.
\end{aligned}
\end{equation}
Using these definitions, target movement in MPC in the presence of target and ownship manoeuvres follows the equations
\begin{equation}
\label{eq:MPCTransitionM}
\begin{aligned}
\beta^+ &= \beta + \arctan{\frac{D_\beta}{D_\rho}}, &
    s^+ &= \frac{s}{ {\left( D^2_\beta + D^2_\rho \right)}^{1/2} }, \\
\dot{\beta}^+ &= \frac{\dot{D}_\beta D_\rho - \dot{D}_\rho D_\beta}{D^2_\beta + D^2_\rho}, &
    \dot{\rho}^+ &= \frac{\dot{D}_\rho D_\rho + \dot{D}_\beta D_\beta}{D^2_\beta + D^2_\rho}, \\
\end{aligned}
\end{equation}
where
\begin{equation}
\label{eq:MPCTransitionMA}
\begin{aligned}
D_\beta &= t_{\beta} + s ( \bm{a} \cdot \bm{n} ), & \dot{D}_\beta &= \dot{\beta} + s ( \dot{\bm{a}} \cdot \bm{n} ), \\
D_\rho &= 1 + t_{\rho} + s ( \bm{a} \cdot \bm{q} ), & \dot{D}_\rho &= \dot{\rho} + s ( \dot{\bm{a}} \cdot \bm{q} ), \\
\end{aligned}
\end{equation}
and the unit vectors $\bm{q}$ and $\bm{n}$ point from the ownship to the target and orthogonally at time $t$:
\begin{equation}
\label{eq:MPCTransitionMB}
\begin{aligned}
\bm{q} &= \left[ \sin{\beta}, \cos{\beta} \right], \\
\bm{n} &= \left[ \cos{\beta}, -\sin{\beta} \right].
\end{aligned}
\end{equation}
The bearing update equation \labelcref{eq:MPCTransitionM} depends on the initial target range $s$. If we know the
\textit{ownship} manoeuvres, we can estimate $s$, assuming that the target moves straight. We often generalise this
conclusion as ``the ownship must outmanoeuvre the target to determine its range''~\cite{Hoelzer:1978:Modified}. For the
non-manoeuvring target $\bm{w}_{\text{T}} = \dot{\bm{w}}_{\text{T}} = 0$, so the subscript ``O'' can be omitted
in~\cref{eq:CartDisplacement,eq:Accelerated} without causing ambiguity. From the Kalman Filter viewpoint, the
acceleration-related terms $(\bm{a} \cdot \bm{n})$, $(\dot{\bm{a}} \cdot \bm{n})$, $(\bm{a} \cdot \bm{q})$, and
$(\dot{\bm{a}} \cdot \bm{q})$ now represent control inputs to the system.

\subsection{Modified Polar Coordinates Versus Log-Polar Coordinates\label{sec:MPCvsLPC}}
BO trackers commonly use bearing, bearing rate, and scaled range rate as state variables because these values are
directly observable on a straight leg of the ownship~\cite{Hoelzer:1978:Modified}. The choice of the fourth state
component is not fixed. One conventional choice is MPC, in which we use inverse range $s$ as the fourth coordinate, as
in~\cref{sec:Straight,sec:EqMotion}. An alternative is to select the logarithm of range $\rho = \ln r$ as the fourth
state component; this gives LPC with the state vector $\bm{x} = [\beta, \dot{\beta}, \dot{\rho}, \rho]$.

The choice of LPC has several theoretical advantages over MPC. Firstly, it allows the derivation of closed-form
expressions for the posterior Cramér-Rao bound (PCRB) in a BO tracking problem~\cite{Brehard:2006:Closed}.
Compared to MPC, the LPC state vector appears more intuitive because it contains two values (bearing and log-range) and
their time derivatives, while MPC mixes $s$ and $\dot{\rho}$. Also, a Gaussian distribution of the fourth state vector
component $\rho$ never results in zero or negative range values, which theoretically is always possible in MPC at the
tails of the distribution. A corollary is that sigma points of the UKF algorithm~\cite{Wan:2000:Unscented} would never
end up at infeasible negative ranges.

The transition from MPC to LPC is simple: we describe the target motion in LPC by the system of differential
equations
\begin{equation}
    \label{eq:MotionLPC}
    \frac{d}{dt}\begin{bmatrix}\beta \\ \dot{\beta} \\ \dot{\rho} \\ \rho\end{bmatrix} =
        \begin{bmatrix}\dot{\beta} \\ -2 \dot{\beta}\dot{\rho} \\ \dot{\beta}^2 - \dot{\rho}^2 \\
        \dot{\rho}\end{bmatrix},
\end{equation}
obtained from Eqs.~(11) and (18) of~\cite{Hoelzer:1978:Modified} by substituting $s = e^{-\rho}$. The same substitution
into~\cref{eq:MPCTransition,eq:MPCTransitionM,eq:MPCTransitionMA} provides the solutions of~\cref{eq:MotionLPC} for
straight and manoeuvring vessels, respectively. For use in the UKF, one should also adapt the range-related elements of
the process noise matrix $Q$.

In accordance with Kalman Filter theory assumptions, we model the target state vector in LPC, $\bm{x} = [\beta,
\dot{\beta}, \dot{\rho}, \rho]$ as a Gaussian random variable:
\begin{equation}
\bm{x} \sim \mathcal{N}(\bm{x}; \bm{\mu}, \Sigma).
\end{equation}

With LPC in mind, we return to the equations of motion and make a key simplification.

\subsection{Instant Manoeuvres\label{sec:ExplicitManoeuvre}}
Assume that the ownship makes an elementary manoeuvre: an instant turn. The state space variables after the manoeuvre,
$[\beta^+, \dot{\beta}^+, \dot{\rho}^+, \rho^+]$, are related to the corresponding values before the turn, $[\beta,
\dot{\beta}, \dot{\rho}, \rho]$ by~\cref{eq:MPCTransitionM,eq:MPCTransitionMA} with $\Delta t = 0$. The target
coordinates and speed do not change, and neither do the ownship coordinates, as the turn is instant. Therefore, the only
non-zero acceleration-related term is $\dot{\bm{a}} = -\dot{\bm{w}} = -\left[ \Delta v_x, \Delta v_y \right]$,
and~\cref{eq:MPCTransitionM} simplifies to
\begin{equation}
\label{eq:MPCTransitionM00}
\begin{aligned}
\beta^+ &= \beta, & \rho^+ &= \rho, \\
    \dot{\beta}^+ &= \dot{\beta} - \frac{( \dot{\bm{w}} \cdot \bm{n} )}{r}, &
    \dot{\rho}^+ &= \dot{\rho} - \frac{( \dot{\bm{w}} \cdot \bm{q} )}{r}.
\end{aligned}
\end{equation}

The properties of instant manoeuvres can be used to model non-instant, gradual manoeuvres by discretisation. Any ownship
trajectory can be approximated as a series of instant manoeuvres interwoven with straight leg motion. The total
cumulative duration of the straight legs should equal the duration of the original complete manoeuvre. Using
sufficiently short legs, we can achieve the desired accuracy of approximation.

With all the preliminaries in place, we now begin to explore the state space distribution from a statistical
perspective.

\section{Closed-Form Expressions for the Post-Manoeuvre Distribution\label{sec:Distribution}}
\subsection{The Post-Manoeuvre Distribution\label{sec:NotGaussian}}
Gaussian approximations of the distribution of state space variables are ubiquitous in filtering, for example, when
using UKF~\cite{Wan:2000:Unscented} or ensemble methods~\cite{Houtekamer:1998:EnKF}. Assume the pre-manoeuvre
distribution of the state space variables is Gaussian. What about the post-manoeuvre distribution?

\begin{prop}[Post-manoeuvre distribution]
\label{prop:prelim_dist}
Let $f_x$ refer to the pre-manoeuvre probability distribution function for the target in coordinates $\bm{x}$ (either
MPC or LPC). Then, the post-manoeuvre distribution is given by
\begin{equation}
    f_{x^+}(\bm{x}^+) = f_x(\bm{h}^{-1}(\bm{x}^+)),
    \nonumber
\end{equation}
where
\begin{equation}
    \label{eq:MPCTransitionInverse}
    \bm{h}^{-1}(\bm{x}^+) =
    \begin{bmatrix}
        \beta^+\\
        \dot{\beta}^+ + f_1(\beta^+, \rho^+) \\
        \dot{\rho}^+ + f_2(\beta^+, \rho^+) \\
        \rho^+
    \end{bmatrix},
\end{equation}
and
\begin{subequations}
    \label{eq:f12}
    \begin{alignat}{2}
      f_1(\beta, \rho) &= e^{-\rho}(\Delta v_x \cos \beta - \Delta v_y \sin \beta), \\
      f_2(\beta, \rho) &= e^{-\rho}(\Delta v_x \sin \beta + \Delta v_y \cos \beta).
    \end{alignat}
\end{subequations}
If working in MPC, then the fourth coordinate of $\bm{x}^+$ is $s^+$ and factors $e^{-\rho}$ in $f_1$ and $f_2$ are
rewritten as $s$.
\end{prop}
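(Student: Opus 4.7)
The plan is to apply the standard change-of-variables formula for probability densities to the deterministic instant-manoeuvre map defined in \cref{sec:ExplicitManoeuvre}. Writing $\bm{h}$ for the forward map $\bm{x}\mapsto\bm{x}^+$ implicit in \cref{eq:MPCTransitionM00}, the transformation law says $f_{x^+}(\bm{x}^+)=f_x(\bm{h}^{-1}(\bm{x}^+))\,|\det J_{\bm{h}^{-1}}(\bm{x}^+)|$ whenever $\bm{h}$ is a diffeomorphism. The proof therefore reduces to two tasks: (i) exhibit $\bm{h}^{-1}$ in the claimed form, and (ii) show that its Jacobian determinant equals $1$ so that no extra factor appears in the density.

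For step (i), I would substitute the explicit expressions $\bm{q}=[\sin\beta,\cos\beta]$, $\bm{n}=[\cos\beta,-\sin\beta]$, $\dot{\bm{w}}=[\Delta v_x,\Delta v_y]$, and $1/r=e^{-\rho}$ into \cref{eq:MPCTransitionM00}. This yields $\dot{\beta}^+=\dot{\beta}-f_1(\beta,\rho)$ and $\dot{\rho}^+=\dot{\rho}-f_2(\beta,\rho)$ with $f_1,f_2$ as in \cref{eq:f12}. Because $\beta^+=\beta$ and $\rho^+=\rho$, the two correction terms depend only on quantities that are preserved by the manoeuvre, so inversion is immediate: one simply adds $f_1(\beta^+,\rho^+)$ and $f_2(\beta^+,\rho^+)$ back to the post-manoeuvre rate components. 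This reproduces \cref{eq:MPCTransitionInverse}; the MPC version follows by replacing $e^{-\rho}$ with $s$, since the preserved coordinate is then $s^+=s$.

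For step (ii), I would compute $J_{\bm{h}^{-1}}$ directly. The rows corresponding to $\beta$ and $\rho$ are $(1,0,0,0)$ and $(0,0,0,1)$, while those corresponding to $\dot{\beta}$ and $\dot{\rho}$ have $1$ on the diagonal, $0$ in the other rate column, and the partial derivatives of $f_1,f_2$ in the $\beta^+$ and $\rho^+$ columns. Laplace expansion along the first column, then along the last column (equivalently, noting that a row permutation produces a triangular matrix with unit diagonal), gives $\det J_{\bm{h}^{-1}}=1$. Combining with step (i) yields the claimed density identity.

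The argument is essentially a routine application of the change-of-variables formula; the substantive content is the observation that instant manoeuvres leave $\beta$ and $\rho$ unchanged, which makes the map simultaneously easy to invert and measure-preserving. I do not anticipate a real obstacle, only bookkeeping around sign conventions — the forward map subtracts $f_i$ while the inverse adds it — and the reminder that the proposition covers both LPC and MPC via the single substitution $e^{-\rho}\leftrightarrow s$.
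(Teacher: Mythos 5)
Your proposal is correct and follows essentially the same route as the paper: write the instant-manoeuvre map $\bm{h}$ from \cref{eq:MPCTransitionM00}, invert it by noting that $\beta$ and $\rho$ (or $s$) are preserved, and apply the change-of-variables formula with a unit Jacobian determinant. The paper writes out the Jacobian explicitly (its off-diagonal entries are exactly your partial derivatives, $\partial f_1/\partial\beta=-f_2$, $\partial f_1/\partial\rho=-f_1$, etc.) and observes the determinant is $1$, just as you do.
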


\begin{proof}
Define the coordinate transformation by~\cref{eq:MPCTransitionM00} from the pre-manoeuvre distribution in $\bm{x}$ to
the post-manoeuvre distribution in $\bm{x}^+$, as the mapping $\bm{h}(\bm{x})$:
\begin{equation}
        \label{eq:MPCTransitionExpanded}
        \bm{h}(\bm{x}) =
        \begin{bmatrix}
            \beta\\
            \dot{\beta} - f_1(\beta, \rho)\\
            \dot{\rho} - f_2(\beta, \rho)\\
            \rho
        \end{bmatrix}.
\end{equation}
By observation, $\bm{h}^{-1}$ defined in the proposition is the inverse of $\bm{h}$.  The multivariate change of
variable theorem then expresses the post-manoeuvre distribution:
\begin{equation}
    \label{eq:MultivariateChangeOfVariable}
    f_{x^+}(\bm{x}^+) = f_x(\bm{h}^{-1}(\bm{x}^+)) \lvert \det\mathcal{J}(\bm{x}^+) \rvert,
\end{equation}
where $\mathcal{J}$ is the Jacobian matrix
\begin{equation*}
    \mathcal{J}(\bm{x}^+) =
    \begin{bmatrix}
        1 & 0 & 0 & 0 \\
        -f_2(\beta, \rho) & 1 & 0 & -f_1(\beta, \rho) \\
        f_1(\beta, \rho) & 0 & 1 &  -f_2(\beta, \rho) \\
        0 & 0 & 0 & 1 \\
    \end{bmatrix}.
\end{equation*}
It is easy to check that $\mathcal{J}$ has determinant $1$, so~\cref{eq:MultivariateChangeOfVariable} reduces to the
appropriate form.
\end{proof}

As the transformation by~\cref{eq:MPCTransitionExpanded} does not alter the bearing $\beta^+$ or the range coordinate
($\rho^+$ or $s^+$), we anticipate that the marginal post-manoeuvre distribution of these two variables is unchanged
from the pre-manoeuvre distribution. Information about the target range comes from the entanglement of $\rho^+$ or $s^+$
with $\dot\beta^+$ and $\dot\rho^+$. We now lay out some special cases in which the post-manoeuvre distribution can be
found exactly.

\begin{prop} \label{prop:dist}
Assume that the pre-manoeuvre distribution (in either LPC or MPC) is well approximated by the Gaussian distribution
$\bm{x} \sim \mathcal{N}(\bm{\mu}, \Sigma)$, so \cref{prop:prelim_dist} provides
\begin{equation}
    \label{eq:MultivariateGaussian}
    f_{x^+}(\bm{x}^+) \propto \text{exp} \left\{ -\frac{1}{2} \left[ \bm{y}^{\text{T}} \Sigma^{-1}\bm{y} \right] \right\},
\end{equation}
where $\bm{y} = h^{-1}(\bm{x}^+) - \bm{\mu}$.
\begin{enumerate}[label=(\arabic{*}), ref=(\arabic{*})]
\item\label{mpc} If the coordinate system is MPC, $x_4 = s$, then
\begin{enumerate}[label=(\roman{*}), ref=(\roman{*})]
 \item\label{mpc.nn} the post-manoeuvre distribution is non-Gaussian, but
 \item\label{mpc.n} the post-manoeuvre distribution is Gaussian if conditioned on the bearing rate $\beta$. In this
 case, the post-manoeuvre distribution in MPC is
\[
    \begin{pmatrix}
        \dot\beta \\ \dot\rho \\ s
    \end{pmatrix}
    \sim
    \mathcal{N} \left(
        \begin{pmatrix}
            \mu_{\dot\beta} - c_1\mu_s \\
            \mu_{\dot\rho}-c_2\mu_s \\
            \mu_s
        \end{pmatrix},
        \mathbf{P}\Sigma_{-\beta} \mathbf{P}^{\text{T}}
    \right).
\]
In this distribution, $c_1= \Delta v_x \cos \beta - \Delta v_y \sin \beta$ and $c_2 = \Delta v_x \sin \beta + \Delta v_y
\cos \beta$ denote the constant factors in~\cref{eq:f12}, $\Sigma_{-\beta}$ means the covariance matrix $\Sigma$ with
the first row and column removed, and
\[
    \mathbf{P} = \begin{pmatrix}
        1 & 0 & -c_1\\
        0 & 1 & -c_2\\
        0 & 0 & 1
    \end{pmatrix}.
\]
It is important to recall that $s$ is strictly positive and any significant probability mass for $s<0$ should formally
be treated as follows: $s\rightarrow-s,\,\beta\rightarrow\beta\pm\pi$.
\end{enumerate}
\item If the coordinate system is LPC, $x_4 = \rho$, then
\begin{enumerate}[label=(\roman{*}), ref=(\roman{*})]
 \item the post-manoeuvre distribution is non-Gaussian, but
 \item the post-manoeuvre distribution is Gaussian if conditioned on $\beta$ and $\rho$, and in this case
\[
    \begin{pmatrix}
        \dot\beta \\
        \dot\rho
    \end{pmatrix}
    \sim
    \mathcal{N} \left(
        \begin{pmatrix}
            \mu_{\dot\beta} + f_1(\beta,\,\rho) \\
            \mu_{\dot\rho}+f_2(\beta,\,\rho)
        \end{pmatrix},
        \Sigma_{-\beta\rho}
    \right),
\]
where $\Sigma_{-\beta\rho}$ refers to the $2\times2$ covariance matrix formed by omitting the first and last row and
column of $\Sigma$.
\end{enumerate}
\end{enumerate}
\end{prop}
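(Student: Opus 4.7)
The plan is to start from the closed form $f_{x^+}(\bm{x}^+) = f_x(\bm{h}^{-1}(\bm{x}^+))$ of \cref{prop:prelim_dist}, substitute the assumed Gaussian $f_x$, and analyse the resulting exponent $\bm{y}^T \Sigma^{-1} \bm{y}$ with $\bm{y} = \bm{h}^{-1}(\bm{x}^+) - \bm{\mu}$ as a function of $\bm{x}^+$.

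For the non-Gaussianity claims in parts (1)(i) and (2)(i), I would show that this exponent is not a quadratic form in $\bm{x}^+$. The obstruction sits in the second and third components of $\bm{y}$, which carry $f_1(\beta^+, \rho^+)$ and $f_2(\beta^+, \rho^+)$ from \cref{eq:f12}. These are trigonometric in $\beta^+$ and in addition depend on $s^+$ linearly (MPC) or on $\rho^+$ exponentially (LPC); squaring them produces contributions such as $s^{+2}\cos^2\beta^+$ or $e^{-2\rho^+}\sin^2\beta^+$ in the exponent, which cannot be absorbed into any form $(\bm{x}^+ - \bm{\nu})^T P^{-1}(\bm{x}^+ - \bm{\nu})$. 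Hence $f_{x^+}$ fails to be Gaussian in either coordinate system.

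For part (1)(ii), fix $\beta^+ = \beta$; then $c_1, c_2$ are constants and the restricted map is affine in $(\dot\beta^+, \dot\rho^+, s^+)$. Direct substitution of \cref{eq:MPCTransitionM00} gives $(\dot\beta^+, \dot\rho^+, s^+)^T = \mathbf{P}(\dot\beta, \dot\rho, s)^T$ with the $\mathbf{P}$ as stated. The affine push-forward of the pre-manoeuvre Gaussian on $(\dot\beta, \dot\rho, s)$ then yields a Gaussian with mean $\mathbf{P}\bm{\mu}_{-\beta}$ and covariance $\mathbf{P}\Sigma_{-\beta}\mathbf{P}^T$; expanding the matrix-vector product recovers the stated mean $(\mu_{\dot\beta}-c_1\mu_s,\,\mu_{\dot\rho}-c_2\mu_s,\,\mu_s)^T$. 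Part (2)(ii) is analogous: conditioning on both $\beta^+$ and $\rho^+$ turns $f_1, f_2$ into pure constants, so the map $(\dot\beta^+, \dot\rho^+) \mapsto (\dot\beta, \dot\rho)$ is a pure translation. The pushed-forward Gaussian therefore has covariance $\Sigma_{-\beta\rho}$ and a mean equal to the pre-manoeuvre mean shifted by the constant offset involving $f_1, f_2$, up to the sign convention of the proposition.

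The main obstacle I anticipate is pinning down the sense of ``conditioned''. A strict Bayesian conditional of $(\dot\beta^+, \dot\rho^+, s^+)$ on $\beta^+$ would invoke Schur-complement corrections in $\Sigma$; the mean and covariance reduce to the stated marginal quantities $\bm{\mu}_{-\beta}, \Sigma_{-\beta}$ (and analogously $\Sigma_{-\beta\rho}$ in LPC) only when $\beta$, respectively $(\beta, \rho)$, is uncorrelated with the remaining state components in the pre-manoeuvre Gaussian. In the write-up I would make this interpretation explicit: fix $\beta$ (and, for LPC, $\rho$) to its observed value, treat the remaining coordinates via their pre-manoeuvre marginal Gaussian, and then invoke the affine push-forward. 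With that convention, the Gaussianity and the stated mean and covariance follow immediately from the affine structure identified above.
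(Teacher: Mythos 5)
Your proposal is correct and follows essentially the same route as the paper: non-Gaussianity from the non-quadratic $\sin\beta$/$\cos\beta$ (and $e^{-\rho}$ or $s$) terms in the exponent, claim (1)(ii) from the linear map $\mathbf{P}$ acting on $(\dot\beta,\dot\rho,s)$ (the paper works with the inverse map and the identity $\left(\mathbf{P}^{-1}\right)^{\text{T}}\Sigma^{-1}\mathbf{P}^{-1}=\left(\mathbf{P}\Sigma\mathbf{P}^{\text{T}}\right)^{-1}$, you with the forward affine push-forward, which is equivalent), and claim (2)(ii) as a constant shift of the means. Your closing caveat about the meaning of ``conditioned'' is well taken --- the paper's proof likewise replaces the true Bayesian conditional by the marginal with $\beta$ plugged in, which agrees with genuine conditioning only when $\beta$ (resp.\ $(\beta,\rho)$) is uncorrelated with the remaining pre-manoeuvre components, as in the paper's diagonal-$\Sigma$ examples --- so making that convention explicit is a strength of your write-up rather than a gap.
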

The proof of~\cref{prop:dist} is in~\cref{app:vis_man}-1. In order to understand the Proposition clearly, let us consider
an illustrative example. Consider the situation of a nearby, poorly estimated target. In Cartesian coordinates, the
pre-manoeuvre distribution estimates that the target is at $(100, 150)$~m from the ownship, but with a standard
deviation of $400$~m. Assume that $\beta$ is fairly accurately known from data. Then~\cref{fig:vis_man} shows how
unusual the MPC and LPC distributions appear in practice. The LPC distribution, which is the topic of this paper,
has many samples with great deviation from the sample mean; that is, the LPC distribution has large kurtosis.

\begin{figure}
  \centering
  \includegraphics[width=2.25in]{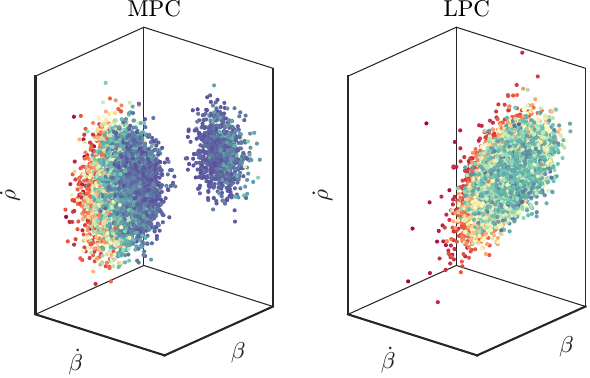}
  \caption{Visualisation by samples of the post-manoeuvre distribution in MPC and LPC. The colour of each sample
  indicates the estimated range of the target in $s$ or $\rho$. As described below~\cref{prop:dist}, the unusual
  situation in play is one in which the actual distance to the target is less than the \emph{uncertainty} in the
  target position. The MPC distribution consists of two roughly Gaussian blobs, which are separated by $\pi$ in
  $\beta$ and represent the probability mass on each side of the ownship. The LPC distribution has clear skew and
  kurtosis; that is, there are rare, large deviations from the sample mean. Notice in both plots that there is a clear
  correlation between $\dot\beta$ and the range, shown by the colour banding; as explained
  in~\cref{prop:dist}, this correlation has been created by the manoeuvre.}
  \label{fig:vis_man}
\end{figure}

Altering the initial position of the target to $(1000, 1500)$~m, and preserving all other parameters, we obtain roughly
identical Gaussian distributions for MPC and LPC in~\cref{fig:vis_man_far}. Full details for
both~\cref{fig:vis_man_far,fig:vis_man} are given in~\cref{app:vis_man}-2.

\begin{figure}
  \centering
  \includegraphics[width=2.25in]{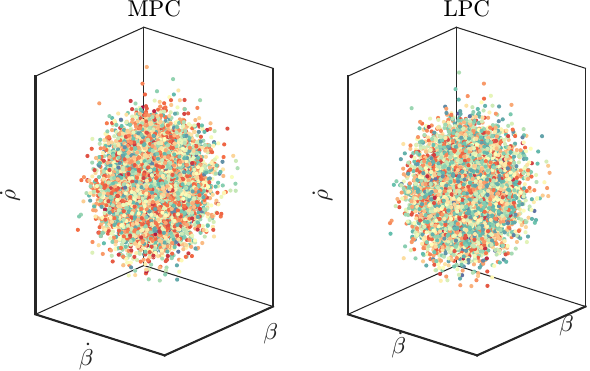}
  \caption{Visualisation by samples of the post-manoeuvre distribution in MPC and LPC. The colour of each sample
    indicates the estimated range. As described below~\cref{prop:dist}, this target is located far from the ownship. The
    MPC and LPC distribution are approximately identical.}
  \label{fig:vis_man_far}
\end{figure}

Our goal in investigating the post-manoeuvre distribution is to assess the viability of Kalman Filter approaches to
target tracking. Such approaches work by estimating the target mean and covariance. Our concern is that those estimates
can be inadequate or misleading if the underlying distribution has, for example, significant kurtosis.~\cref{prop:dist}
establishes some special cases in which we do understand the distribution, but~\cref{fig:vis_man,fig:vis_man_far}
demonstrate that, in the general case, the LPC post-manoeuvre distribution both can, and can not, resemble a Gaussian
distribution. In order to assuredly use a Kalman Filter in LPC in the general case, we need to know whether a Gaussian
approximation is defensible. For this reason, the main body of the paper derives the moments of the post-manoeuvre
distribution. The skew and kurtosis of the post-manoeuvre distribution allow monitoring whether
a Kalman Filter estimate can be trusted.

\subsection{Mean and Covariance of the Post-Manoeuvre Distribution\label{sec:MeanCovariance}}
Define the pre-manoeuvre mean and covariance of the target state vector as $\bm{\mu}_x$ and $\Sigma_x$; we seek the
corresponding post-manoeuvre mean and covariance, $\bm{\mu}^{+}_x$ and $\Sigma^{+}_x$.

From~\cref{eq:MPCTransitionExpanded}, observe that any post-manoeuvre state space variable $x_k^+$, where
$k \in \{0, 1, 2, 3\}$, is defined in terms of its pre-manoeuvre value, bearing, and range as
\begin{equation}
    \label{eq:GeneralPosterior}
    x_k^+(x_k) = x_k + a_k \frac{\sin \beta}{r} + b_k \frac{\cos \beta}{r},
\end{equation}
where $a_0 = a_3 = b_0 = b_3 = 0$,  $a_1 = \Delta v_y$ and $b_1 = -\Delta v_x$, $a_2 = -\Delta v_x$ and $b_2 = -\Delta
v_y$.

The following results present the exact mean and covariance for the post-manoeuvre distribution.
\begin{thm}[Mean of the post-manoeuvre distribution]
    \label{thm:PMMean}
    The expected value of a post-manoeuvre variable $x_k^+$ defined in~\cref{eq:GeneralPosterior} is:
    \begin{align}
        \label{eq:GeneralMean}
        \mathbb{E}\left[x_k^+\right] = \;
            & \mathbb{E}\left[ x_k \right]
                + e^{-\mu_{\rho} - \frac{1}{2}(\sigma_{\beta \beta}
                - \sigma_{\rho \rho})}(a_k \sin(\mu_{\beta} - \sigma_{\beta\rho}) \nonumber \\
            & + b_k \cos(\mu_{\beta} - \sigma_{\beta\rho})).
     \end{align}
\end{thm}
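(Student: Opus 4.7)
The plan is to reduce \cref{thm:PMMean} to computing two scalar expectations against the bivariate Gaussian marginal of $(\beta,\rho)$. Applying linearity of expectation to \cref{eq:GeneralPosterior} and using $1/r = e^{-\rho}$ in LPC,
\begin{equation}
\mathbb{E}[x_k^+] = \mathbb{E}[x_k] + a_k\,\mathbb{E}[e^{-\rho}\sin\beta] + b_k\,\mathbb{E}[e^{-\rho}\cos\beta],
\nonumber
\end{equation}
so all the content of the theorem lies in these two expectations. Note also that when $a_k = b_k = 0$ (that is, $k=0$ or $k=3$) the formula collapses to $\mathbb{E}[x_k^+] = \mu_{x_k}$, consistent with $\beta^+=\beta$ and $\rho^+=\rho$ in \cref{eq:MPCTransitionM00}.

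The main step is to package both remaining expectations into one by introducing the complex exponential $e^{-\rho + i\beta} = e^{-\rho}\cos\beta + i\,e^{-\rho}\sin\beta$. Since $(\beta,\rho)$ are jointly Gaussian, the bivariate MGF $\mathbb{E}[e^{s\rho+t\beta}] = \exp\!\bigl(s\mu_\rho + t\mu_\beta + \tfrac12(s^2\sigma_{\rho\rho} + 2st\sigma_{\beta\rho} + t^2\sigma_{\beta\beta})\bigr)$ extends to complex arguments; evaluating at $(s,t)=(-1,i)$ and separating modulus and phase gives
\begin{equation}
\mathbb{E}[e^{-\rho+i\beta}] = e^{-\mu_\rho - \tfrac12(\sigma_{\beta\beta}-\sigma_{\rho\rho})}\,e^{\,i(\mu_\beta - \sigma_{\beta\rho})}.
\nonumber
\end{equation}
Taking real and imaginary parts immediately yields the closed forms for $\mathbb{E}[e^{-\rho}\cos\beta]$ and $\mathbb{E}[e^{-\rho}\sin\beta]$, and substituting back into the first display and collecting the $a_k$ and $b_k$ coefficients produces the claimed expression.

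The main obstacle I anticipate is justifying the complex-argument evaluation of the MGF rather than taking it on faith. The cleanest route is to bypass any abstract analytic-continuation argument and instead integrate directly against the bivariate Gaussian density of $(\beta,\rho)$: writing $\sin\beta$ and $\cos\beta$ via Euler's formula reduces each expectation to a Gaussian integral with a shifted, partially imaginary linear term in the exponent, which is evaluated by completing the square. Convergence is transparent because $|\sin|,|\cos|\le 1$ and the Gaussian tail in $\rho$ dominates the $e^{-\rho}$ factor. Both routes deliver the same answer, after which the rest is routine algebraic rearrangement into the form stated in \cref{eq:GeneralMean}.
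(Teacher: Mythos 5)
Your proposal is correct and follows essentially the same route as the paper: reduce the claim to $\mathbb{E}[e^{-\rho}\sin\beta]$ and $\mathbb{E}[e^{-\rho}\cos\beta]$, evaluate $\mathbb{E}[e^{-\rho+i\beta}]$ via the Gaussian moment-generating function extended to a complex argument (the paper's ``GGMGF'' with $\bm{h}=[i,0,0,-1]$, equivalent to your bivariate marginal at $(s,t)=(-1,i)$), and take real and imaginary parts. The only differences are cosmetic --- you work directly with the general $a_k,b_k$ form rather than treating $k=1$ and $k=2$ separately, and you sketch a completing-the-square justification where the paper cites references for the complex-argument MGF.
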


\begin{proof}
The integrals for the mean values take the form
\begin{equation}
\label{eq:PostTurn}
    \mathbb{E}\left[ x_k^+ \right] = \int x_k^+(\bm{x}) \mathcal{N}(\bm{x}; \bm{\mu}, \Sigma) d\bm{x}.
\end{equation}

For $k \in \{0, 3\}$ (bearing and log-range), $x_k^+ = x_k$, hence $\mathbb{E}\left[x^+_k\right] = \mu_k$.

Now consider the mean for bearing rate, $k = 1$. Substituting~\cref{eq:MPCTransitionM00} for $\dot{\beta}^+$
into~\cref{eq:PostTurn}, we get
\begin{align}
\label{eq:BetaDotPlusMean}
\mathbb{E}\left[ \dot{\beta}^+ \right]
    &= \int \dot{\beta}^+ \mathcal{N} \! \left( \bm{x}; \bm{\mu}, \Sigma \right) d\bm{x} \nonumber \\
    &= \int \left( \dot{\beta} - \frac{\dot{\bm{w}} \cdot \bm{n}}{r} \right) \mathcal{N} \! \left( \bm{x}; \bm{\mu}, \Sigma \right) d\bm{x} \nonumber \\
    &= \mathbb{E}\left[\dot{\beta}\right] - \dot{\bm{w}} \cdot \mathbb{E}\left[\frac{\bm{n}}{r}\right].
\end{align}
If we substitute the inverse range $r^{-1}$ with $e^{-\rho}$ and the sine and cosine components of $\bm{n}$ with real
and imaginary parts of $e^{i \beta}$, the last term becomes a linear combination of the real and imaginary parts
of~$\mathbb{E}\left[ e^{-\rho + i \beta} \right]$.

Note how this last expected value is similar to the moment-generating function (of the multivariate normal
distribution), just for a complex-valued argument. As the case of a complex argument appears unnamed in the literature,
we use the phrase `Generalised Gaussian MGF'.
\begin{prop}[The `Generalised Gaussian MGF' (GGMGF)]
    Let $\bm{h} \in \mathbb{C}^n$ be fixed and $\bm{x} \in \mathbb{R}^n$ be a Gaussian random variable. Then
    \label{prop:GeneralisedMGF}
    \begin{equation}
    \label{eq:GeneralMVG}
    \mathbb{E}\left[e^{\bm{h}^{\text{T}}\bm{x}}\right] = e^{\bm{\mu}^{\text{T}}\bm{h}+\frac{1}{2}\bm{h}^{\text{T}}\Sigma\bm{h}} \text{.}
    \end{equation}
\end{prop}
This generalises the Gaussian moment-generating function (MGF) to complex arguments.
The proof is described in \S 1.2 of \cite{Schmidt:2012:Stochastics} and \S 2 of \cite{Bryc:1995:Normal}.

\begin{cor}[GGMGF for specific $\bm{h}$]
    \label{cor:GeneralMGFExp}
    For any $\bm{h} = [in, 0, 0, -m]$, where $m, n\in\mathbb{N}\cup\{0\}$, the GGMGF has the form
    \begin{equation}
        \label{eq:ExpMN}
  \! \! \mathbb{E} \! \left[ r^{-m} e^{in\beta} \right] = e^{-m\mu_{\rho} - \frac{1}{2} \left( n^2\sigma_{\beta \beta} - m^2\sigma_{\rho \rho} \right) + i(n\mu_{\beta} - mn\sigma_{\beta\rho})}.
    \end{equation}
    The real and imaginary parts of \cref{eq:ExpMN} give $\mathbb{E} \left[ r^{-m} \cos\! {(n\beta)} \right]$ and
    $\mathbb{E} \left[ r^{-m} \sin\! {(n\beta)} \right]$, respectively.
\end{cor}

Now, continuing the proof, to obtain the expected value of $e^{-\rho + i \beta}$, we use \cref{cor:GeneralMGFExp} with
$\bm{h} = [i, 0, 0, -1]$ to find
\begin{equation}
    \label{eq:hInGGMGF}
 \! \mathbb{E} \! \left[ e^{-\rho + i \beta} \right] =
    \exp{ \! \big( \! -\mu_{\rho} - \frac{\sigma_{\beta \beta} - \sigma_{\rho \rho}}{2} + i(\mu_{\beta} -
    \sigma_{\beta\rho}) \big)}.
\end{equation}
The respective real and imaginary parts are $\mathbb{E}\left[r^{-1}\cos \beta\right]$ and $\mathbb{E}\left[r^{-1} \sin
\beta\right]$. Substitution of these expressions into~\cref{eq:BetaDotPlusMean} produces the mean of the post-manoeuvre
bearing rate:
\begin{multline}
    \label{eq:ExpectedBetaDot}
    \mathbb{E} \! \left[ \dot{\beta}^+ \right] =
      \mathbb{E} \! \left[ \dot{\beta} \right] -
          \left( \Delta v_x \cos(\mu_{\beta} - \sigma_{\beta\rho}) \right. - \\
          \left. \Delta v_y \sin(\mu_{\beta} - \sigma_{\beta\rho}) \right)
          e^{-\mu_{\rho} - \frac{1}{2} (\sigma_{\beta \beta} - \sigma_{\rho \rho})}.
\end{multline}
We can further simplify~\cref{eq:ExpectedBetaDot} by describing the ownship change in velocity through the turning angle
$\alpha$. Then, $\Delta v_x = \Delta v \cos \alpha$ and $\Delta v_y = \Delta v \sin \alpha$, so that
\begin{equation}
    \label{eq:ExpectedBetaDotAlpha}
\!\!  \mathbb{E} \! \left[ \dot{\beta}^+ \right] \! =
      \mathbb{E} \! \left[ \dot{\beta} \right] -
      \Delta v e^{-\mu_{\rho} - \frac{\sigma_{\beta \beta} - \sigma_{\rho \rho}}{2}} \cos{(\alpha + \mu_{\beta} - \sigma_{\beta\rho})}.
\end{equation}
Finally, consider scaled range rate, $k = 2$. Using~\cref{eq:MPCTransitionM00} for $\dot{\rho}^+$ in~\cref{eq:PostTurn},
and again employing \cref{prop:GeneralisedMGF}, we obtain
\begin{align}
    \mathbb{E} \! \left[ \dot{\rho}^+ \right]
      &= \int \dot{\rho}^+ \mathcal{N} \! \left( \bm{x}; \bm{\mu}, \Sigma \right) d\bm{x} \nonumber \\
      &= \int \left( \dot{\rho} - s(\dot{\bm{w}} \cdot \bm{q}) \right) \mathcal{N} \! \left( \bm{x}; \bm{\mu}, \Sigma \right) d\bm{x} \nonumber \\
      &= \mathbb{E} \left[ \dot{\rho} \right] - \dot{\bm{w}}\cdot \mathbb{E}\left[ e^{-\rho} \bm{q} \right] \nonumber \\
      &= \mathbb{E} \left[ \dot{\rho} \right] -
          \left( \Delta v_x \sin(\mu_{\beta} - \sigma_{\beta\rho}) \right. \nonumber \\
      & \quad \quad \left. + \Delta v_y \cos(\mu_{\beta} - \sigma_{\beta\rho}) \right)
      e^{-\mu_{\rho} - \frac{1}{2}(\sigma_{\beta \beta} - \sigma_{\rho \rho})} \label{eq:ExpectedRhoDot} \\
      &= \mathbb{E} \left[ \dot{\rho} \right]
          - \Delta v e^{-\mu_{\rho} - \frac{1}{2}(\sigma_{\beta \beta} - \sigma_{\rho \rho})} \sin{(\alpha + \mu_{\beta} - \sigma_{\beta\rho})}. \nonumber
\end{align}
\Cref{eq:ExpectedBetaDot,eq:ExpectedRhoDot} match \cref{eq:GeneralMean} with the coefficients $a_k$ and $b_k$
from~\cref{eq:GeneralPosterior}, thus completing the proof.
\end{proof}

Now derive the second raw moments for the post-manoeuvre distribution. By combining these second raw moments with the
first raw moments \labelcref{eq:GeneralMean}, we will also obtain the post-manoeuvre covariance matrix $\Sigma^+$.
\begin{thm}[Second raw moment of the post-manoeuvre distribution]
    \label{thm:SecondMoment}
    For any two post-manoeuvre state space variables $x_j^+$ and $x_k^+$, their second raw moment is:
    \begin{align}
        \label{eq:GeneralXiiXij}
        \mathbb{E} \! \left[ x_j^+ x_k^+ \right] = \;
          & \mathbb{E} \left[ x_j x_k \right] + a_k \mathbb{E} \! \left[x_j \frac{\sin{\beta}}{r} \right] + b_k \mathbb{E} \! \left[ x_j \frac{\cos{\beta}}{r} \right] \nonumber \\
          & + a_j \mathbb{E} \! \left[ x_k \frac{\sin{\beta}}{r} \right] + b_j \mathbb{E} \! \left[ x_j \frac{\cos{\beta}}{r} \right] \nonumber \\
          & + \frac{a_j a_k + b_j b_k}{2} \mathbb{E} \! \left[ \frac{1}{r^2} \right] \nonumber \\
          & + \frac{b_j b_k - a_j a_k}{2} \mathbb{E} \! \left[ \frac{\cos{\! (2\beta)}}{r^2} \right] \nonumber \\
          & + \frac{a_j b_k + a_k b_j}{2} \mathbb{E} \! \left[ \frac{\sin{\! (2\beta)}}{r^2} \right].
    \end{align}
    All the expected values in this expression can be obtained in closed form using the GGMGF~\labelcref{eq:GeneralMVG}.
\end{thm}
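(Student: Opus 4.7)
The plan is to substitute the affine representation from~\cref{eq:GeneralPosterior} into the product $x_j^+ x_k^+$ and reduce every resulting summand to an expectation that the GGMGF can evaluate. First I would write
\[
x_j^+ x_k^+ = \Big(x_j + \tfrac{a_j \sin\beta + b_j \cos\beta}{r}\Big)\Big(x_k + \tfrac{a_k \sin\beta + b_k \cos\beta}{r}\Big)
\]
and expand into nine summands: the bare product $x_j x_k$; four linear cross terms of the form $x_\ell (a_m \sin\beta + b_m \cos\beta)/r$; and four quadratic $r^{-2}$ terms involving $\sin^2\beta$, $\cos^2\beta$, and $\sin\beta\cos\beta$.

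Taking expectation, the linear cross terms directly yield the four $\mathbb{E}[x_\ell \sin\beta/r]$ and $\mathbb{E}[x_\ell \cos\beta/r]$ contributions on the right-hand side of~\cref{eq:GeneralXiiXij}. For the quadratic block I would apply the double-angle identities $\sin^2\beta = (1-\cos 2\beta)/2$, $\cos^2\beta = (1+\cos 2\beta)/2$, and $\sin\beta\cos\beta = \sin(2\beta)/2$ to rewrite the four $r^{-2}$ pieces as contributions proportional to $\mathbb{E}[r^{-2}]$, $\mathbb{E}[\cos(2\beta)/r^2]$, and $\mathbb{E}[\sin(2\beta)/r^2]$; collecting coefficients produces exactly the $(a_j a_k + b_j b_k)/2$, $(b_j b_k - a_j a_k)/2$, and $(a_j b_k + a_k b_j)/2$ weights stated in the theorem.

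It then remains to confirm the closing claim that each expectation on the right-hand side is available in closed form. The pure moments $\mathbb{E}[r^{-m}\cos(n\beta)]$ and $\mathbb{E}[r^{-m}\sin(n\beta)]$ follow directly from~\cref{cor:GeneralMGFExp} with $\bm{h} = [in, 0, 0, -m]$ by reading off real and imaginary parts. The mixed moments $\mathbb{E}[x_j \sin\beta/r]$ and $\mathbb{E}[x_j \cos\beta/r]$ for $x_j \in \{\beta, \dot\beta, \dot\rho, \rho\}$ arise as real and imaginary parts of $\mathbb{E}[x_j e^{-\rho + i\beta}]$, which I would obtain by differentiating the GGMGF~\cref{eq:GeneralMVG} under the expectation,
\[
\mathbb{E}\!\left[x_j e^{\bm{h}^{\mathrm{T}}\bm{x}}\right] = \partial_{h_j} e^{\bm{\mu}^{\mathrm{T}}\bm{h} + \frac{1}{2}\bm{h}^{\mathrm{T}}\Sigma\bm{h}} = \bigl(\mu_j + (\Sigma\bm{h})_j\bigr)\, e^{\bm{\mu}^{\mathrm{T}}\bm{h} + \frac{1}{2}\bm{h}^{\mathrm{T}}\Sigma\bm{h}},
\]
evaluated at $\bm{h} = [i, 0, 0, -1]$.

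The main obstacle is purely bookkeeping rather than conceptual: the nine-term expansion, the double-angle substitutions, and the complex-to-real separations must be tracked carefully so that sign conventions and coefficient groupings line up with~\cref{eq:GeneralXiiXij}. No new machinery beyond what was needed for~\cref{thm:PMMean} is required.
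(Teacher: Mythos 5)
Your proposal is correct and follows essentially the same route as the paper: substitute \cref{eq:GeneralPosterior}, expand the product, interchange expectation with the sum, apply double-angle identities to the $r^{-2}$ block, and evaluate the resulting expectations via \cref{cor:GeneralMGFExp} and the first derivative of the GGMGF (the paper's \cref{lem:GeneralMGFXiExp}, which you rederive rather than cite). The paper's own proof is just a terser statement of exactly this bookkeeping.
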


\begin{cor}[$k=j$ case of second raw moment]
    \label{cor:SecondMomentEqualKJ}
    If $k = j$ for the two state space variables in~\cref{thm:SecondMoment}, then
    \begin{align}
        \label{eq:GeneralXiiSquared}
        \mathbb{E} \! \left[ \left( x_j^+ \right)^2 \right] = \;
          & \mathbb{E} \! \left[x_j^2 \right] + 2 a_j \mathbb{E} \! \left[ x_j \frac{\sin{\beta}}{r} \right] + 2 b_j \mathbb{E} \! \left[ x_j \frac{\cos{\beta}}{r} \right] \nonumber \\
          &+ \frac{a_j^2 + b_j^2}{2} \mathbb{E} \! \left[ \frac{1}{r^2} \right] + \frac{b_j^2 - a_j^2}{2} \mathbb{E} \! \left[ \frac{\cos{\! (2\beta)}}{r^2} \right] \nonumber \\
          &+ a_j b_j \mathbb{E} \! \left[\frac{\sin{\! (2\beta)}}{r^2} \right].
    \end{align}
\end{cor}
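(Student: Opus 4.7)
The corollary is a direct specialization of \cref{thm:SecondMoment} obtained by setting $k=j$ in the general formula for $\mathbb{E}[x_j^+ x_k^+]$. The plan is therefore purely algebraic: no additional integrals, moment-generating-function manipulations, or applications of \cref{prop:GeneralisedMGF} are needed. The work consists entirely of bookkeeping, collecting terms, and verifying that the resulting coefficients match the stated form.

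Concretely, I would carry out the following steps in order. First, substitute $k=j$ into the right-hand side of \cref{eq:GeneralXiiXij}. The first term $\mathbb{E}[x_j x_k]$ collapses to $\mathbb{E}[x_j^2]$. Next, the two pairs of cross terms coalesce: the sine terms $a_k\mathbb{E}[x_j\sin\beta/r] + a_j\mathbb{E}[x_k\sin\beta/r]$ combine to $2a_j\mathbb{E}[x_j\sin\beta/r]$, and the cosine terms combine analogously to $2b_j\mathbb{E}[x_j\cos\beta/r]$ (noting in passing that what appears to be a small typo in \cref{eq:GeneralXiiXij}, where $x_j$ is written instead of $x_k$ in one cosine term, becomes immaterial once $k=j$). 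Finally, the three coefficient expressions in the second-order terms simplify by inspection: $a_j a_k + b_j b_k \mapsto a_j^2 + b_j^2$, $b_j b_k - a_j a_k \mapsto b_j^2 - a_j^2$, and crucially $a_j b_k + a_k b_j \mapsto 2 a_j b_j$, so the prefactor $\tfrac{1}{2}$ in the last line is absorbed and the $\sin(2\beta)/r^2$ term picks up a net coefficient of $a_j b_j$.

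Assembling these simplifications gives exactly the expression in \cref{eq:GeneralXiiSquared}, completing the proof. There is no real obstacle here; the only care required is to track the signs and coefficients accurately and to observe that every expected value appearing on the right-hand side is already known in closed form via \cref{cor:GeneralMGFExp}, so the statement that \cref{eq:GeneralXiiSquared} is computable in closed form is inherited directly from \cref{thm:SecondMoment}.
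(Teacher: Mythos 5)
Your proposal is correct and matches the paper's approach exactly: the paper simply states that the corollary ``immediately follows from \cref{eq:GeneralXiiXij} for $j=k$,'' and your explicit coefficient bookkeeping (including noting that the apparent $x_j$/$x_k$ typo in the theorem's cosine term is immaterial once $k=j$) is just a spelled-out version of that one-line argument.
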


Before presenting the proof of \cref{thm:SecondMoment}, we need one more auxiliary result. The theorem proof follows.

\begin{lem}[First derivative of GGMGF for specific $\bm{h}$]
    \label{lem:GeneralMGFXiExp}
    For any $\bm{h} = [in, 0, 0, -m]$, where $m, n\in\mathbb{N}\cup\{0\}$, the first derivative of the GGMGF with
    respect to $x_j$ has the form
    \begin{equation}
        \label{eq:PartialDerivative}
        \mathbb{E}\left[ x_j r^{-m} e^{in\beta} \right] = (\mu_j - m\sigma_{\rho j} + in\sigma_{\beta j}) \mathbb{E}\left[r^{-m} e^{in\beta} \right].
    \end{equation}
\end{lem}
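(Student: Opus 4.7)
My plan is to obtain \cref{lem:GeneralMGFXiExp} as a direct consequence of differentiating the GGMGF identity in \cref{prop:GeneralisedMGF} with respect to one component of $\bm{h}$, then specialising to $\bm{h}=[in,0,0,-m]$.

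First I would start from
\[
    \mathbb{E}\left[e^{\bm{h}^{\text{T}}\bm{x}}\right] = e^{\bm{\mu}^{\text{T}}\bm{h}+\frac{1}{2}\bm{h}^{\text{T}}\Sigma\bm{h}},
\]
which is valid for all $\bm{h}\in\mathbb{C}^n$ by \cref{prop:GeneralisedMGF}. Differentiating both sides with respect to $h_j$, the right-hand side gives $(\mu_j + (\Sigma\bm{h})_j)\,e^{\bm{\mu}^{\text{T}}\bm{h}+\frac{1}{2}\bm{h}^{\text{T}}\Sigma\bm{h}}$, while for the left-hand side I would justify moving the derivative inside the expectation, yielding $\mathbb{E}\bigl[x_j\,e^{\bm{h}^{\text{T}}\bm{x}}\bigr]$. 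This gives the general identity
\[
    \mathbb{E}\!\left[x_j e^{\bm{h}^{\text{T}}\bm{x}}\right] = \bigl(\mu_j + (\Sigma\bm{h})_j\bigr)\,\mathbb{E}\!\left[e^{\bm{h}^{\text{T}}\bm{x}}\right].
\]

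Next I would substitute the specific $\bm{h}=[in,0,0,-m]$. Since $x_1=\beta$ and $x_4=\rho$, the inner product $\bm{h}^{\text{T}}\bm{x} = in\beta - m\rho$, so $e^{\bm{h}^{\text{T}}\bm{x}} = r^{-m}e^{in\beta}$ after using $r=e^{\rho}$. Moreover,
\[
    (\Sigma\bm{h})_j = in\,\sigma_{\beta j} - m\,\sigma_{\rho j},
\]
reading off the first and fourth columns of $\Sigma$. Plugging these into the general identity yields precisely \cref{eq:PartialDerivative}.

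The only genuinely delicate step is justifying the exchange of differentiation and expectation when $\bm{h}$ is complex. The cleanest way I would handle this is to appeal to the analyticity of the Gaussian characteristic function in its complex argument: the integral $\mathbb{E}[e^{\bm{h}^{\text{T}}\bm{x}}]$ converges absolutely and is holomorphic in each $h_j\in\mathbb{C}$ (as the Gaussian density provides super-exponential decay dominating $|e^{\bm{h}^{\text{T}}\bm{x}}|=e^{\operatorname{Re}(\bm{h})^{\text{T}}\bm{x}}$ uniformly in a neighbourhood), so termwise differentiation under the integral is justified by standard dominated convergence. Alternatively, one could differentiate the real-argument identity at a real basepoint and extend to complex $\bm{h}$ by the identity principle, but the dominated-convergence route is more direct.
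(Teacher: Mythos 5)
Your proposal is correct and follows essentially the same route as the paper: differentiate the GGMGF identity of \cref{prop:GeneralisedMGF} with respect to $h_j$, interchange differentiation and expectation, and then substitute $\bm{h}=[in,0,0,-m]$ to read off $(\Sigma\bm{h})_j = in\sigma_{\beta j}-m\sigma_{\rho j}$. The only difference is that you explicitly justify the interchange via dominated convergence and analyticity, a step the paper simply asserts, so your version is if anything slightly more rigorous.
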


\begin{proof}
    Consider the partial derivative of the GGMGF with respect to $h_j$. Swapping the order of expectation and
    differentiation, we have
\begin{align}
    \mathbb{E}\left[x_j e^{\bm{h}^{\text{T}} \bm{x}} \right]
    &= \frac{\partial}{\partial h_j} \left( \mathbb{E}\left[ e^{\bm{h}^{\text{T}} \bm{x}} \right] \right) \nonumber \\
    &= \frac{\partial}{\partial h_j} e^{\bm{\mu}^{\text{T}} \bm{h} + \frac{1}{2} \bm{h}^{\text{T}} \Sigma \bm{h}} \nonumber \\
    &= \left(\mu_j + \Sigma_j \bm{h} \right) e^{\bm{\mu}^{\text{T}} \bm{h} + \frac{1}{2} \bm{h}^{\text{T}} \Sigma \bm{h}}, \nonumber
\end{align}
where $\Sigma_j$ denotes the $j^\text{th}$ row of the matrix $\Sigma$. Without loss of generality,
substitute $\bm{h} = [in, 0, 0, -m]$ into this expression to obtain \cref{eq:PartialDerivative}.
\end{proof}

\begin{proof}[Proof of \cref{thm:SecondMoment}]
In the integral form, the second-order raw moments of the post-manoeuvre distribution are
\begin{equation}
    \label{eq:PostTurnM2}
    \mathbb{E} \! \left[ x_j^+ x_k^+ \right] =
      \int x_j^+ x_k^+ \mathcal{N} \! \left( \bm{x}; \bm{\mu}, \Sigma \right) d\bm{x}.
\end{equation}
Substituting \cref{eq:GeneralPosterior} for $x_j^+$, $x_k^+$, expanding the product, then interchanging the order of
integration and expectation, and grouping the terms with the same expected values we obtain the decomposition
\labelcref{eq:GeneralXiiXij}. All the expected values in~\cref{eq:GeneralXiiXij} are given in closed form by the real
and imaginary parts of ~\cref{eq:ExpMN,eq:PartialDerivative} for the respective $m$ and $n$ indices.
\end{proof}
\Cref{cor:SecondMomentEqualKJ} immediately follows from \cref{eq:GeneralXiiXij} for $j = k$. Therefore, we have derived
closed-form expressions for all the post-manoeuvre first and second (raw and central) moments.

\subsection{Higher-Order Moments of the Post-Manoeuvre Distribution\label{sec:HigherMoments}}
In~\cref{thm:PMMean,thm:SecondMoment}, we see that the first two moments of the post-manoeuvre distribution are linear
combinations of pre-manoeuvre moments. We generalise this for the arbitrary $N^{\text{th}}$ raw moment of the
post-manoeuvre distribution.

\begin{prop}[$N^\text{th}$ raw moment of the post-manoeuvre distribution]
    \label{prop:GeneralNthMoment}
    For post-manoeuvre state space variables $x_k^+$ each raised to powers $q_k$, $k = 0, \dots, 3$,
    such that $\sum_{j=0}^3 q_k = N$, the $N^\text{th}$ raw moment is
    \begin{align}
        \label{eq:NthMomentAsProduct}
 \!\!\! \mathbb{E} \! \left[ \prod_{k=0}^3 (x_k^+)^{q_k} \right] \! =
            \mathbb{E} \! \left[ \prod_{k=0}^3 \! \left(x_k + a_k \frac{\sin \beta}{r} + b_k \frac{\cos \beta}{r} \! \right)^{q_k} \right] \!\!,
    \end{align}
    which, when expanded, yields a linear combination of
    \begin{align}
        \label{eq:ExpectedNotation}
        E_{p_{0},p_{1},p_{2},p_{3},m,n}^{(s)} &= \mathbb{E}\left[\prod_{k=0}^3 x_k^{p_k} r^{-m}\sin (n \beta)\right]\text{, and} \nonumber \\
        E_{p_{0},p_{1},p_{2},p_{3},m,n}^{(c)} &= \mathbb{E}\left[\prod_{k=0}^3 x_k^{p_k} r^{-m}\cos (n \beta)\right],
    \end{align}
    where $\sum_{k=0}^3 p_k + m = N$, for $0 \le n \le m$.
\end{prop}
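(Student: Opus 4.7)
The plan is to establish \cref{eq:NthMomentAsProduct} by direct substitution and then reduce the resulting expansion to the stated linear combination of $E^{(s)}$ and $E^{(c)}$ terms via the multinomial theorem and standard trigonometric identities.

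First, I would apply \cref{eq:GeneralPosterior} factor-by-factor inside the expectation: since $x_k^+ = x_k + a_k \sin\beta/r + b_k \cos\beta/r$, the product $\prod_{k=0}^3 (x_k^+)^{q_k}$ is pointwise equal to the integrand in the right-hand side of \cref{eq:NthMomentAsProduct}, so linearity of expectation gives that identity immediately. This is the easy half.

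The substantive step is to expand the product and classify the resulting terms. For each $k$, I would invoke the trinomial expansion
\[
\left(x_k + a_k \tfrac{\sin\beta}{r} + b_k \tfrac{\cos\beta}{r}\right)^{q_k}
= \sum_{p_k+\alpha_k+\gamma_k=q_k} \binom{q_k}{p_k,\alpha_k,\gamma_k} x_k^{p_k} a_k^{\alpha_k} b_k^{\gamma_k} \frac{(\sin\beta)^{\alpha_k}(\cos\beta)^{\gamma_k}}{r^{\alpha_k+\gamma_k}},
\]
then multiply the four expansions together. Each resulting term has the shape
\[
C \cdot \prod_{k=0}^{3} x_k^{p_k} \cdot r^{-M} (\sin\beta)^{A} (\cos\beta)^{B},
\]
where $A=\sum_k \alpha_k$, $B=\sum_k \gamma_k$, $M = A+B$, and $\sum_k p_k + M = \sum_k q_k = N$, which matches the claimed index constraint $\sum p_k + m = N$.

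Finally, I would convert each $(\sin\beta)^A(\cos\beta)^B$ into a real linear combination of $\{\sin(n\beta),\cos(n\beta)\}_{0\le n\le A+B}$ using the identities $\sin\beta=(e^{i\beta}-e^{-i\beta})/(2i)$ and $\cos\beta=(e^{i\beta}+e^{-i\beta})/2$: binomial expansion of each factor and collection of like exponentials produces terms $e^{\pm in\beta}$ with $|n|\le A+B=M$, and pairing the $\pm n$ terms yields exactly $\sin(n\beta)$ or $\cos(n\beta)$. Taking expectations term-by-term — again by linearity — then produces a linear combination of the $E^{(s)}_{p_0,p_1,p_2,p_3,m,n}$ and $E^{(c)}_{p_0,p_1,p_2,p_3,m,n}$ of \cref{eq:ExpectedNotation}, with $m=M$ and $0\le n\le m$, as required.

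The only real obstacle is bookkeeping: verifying that the exponent constraint $\sum p_k + m = N$ is preserved throughout and that the parity bound $0\le n\le m$ is tight after the trigonometric reduction. Both follow transparently from the fact that the trinomial expansion preserves $p_k+\alpha_k+\gamma_k=q_k$ and that de Moivre's formula cannot produce harmonics of frequency exceeding the total degree $A+B$ in $\sin\beta,\cos\beta$. No convergence issues arise because the expansion is a finite sum for each fixed $N$, and the individual expectations $E^{(s)}, E^{(c)}$ are guaranteed finite by the GGMGF (\cref{prop:GeneralisedMGF}).
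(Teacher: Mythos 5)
Your proposal is correct and follows essentially the same route as the paper's proof: pointwise substitution of \cref{eq:GeneralPosterior}, multinomial expansion into terms of the form $\prod_k x_k^{p_k} r^{-m}\sin^{c}\beta\cos^{d}\beta$ with $\sum_k p_k + m = N$, and reduction of the trigonometric powers to harmonics $\sin(n\beta),\cos(n\beta)$ with $0\le n\le m$ via Euler/De Moivre before invoking linearity of expectation. The only cosmetic difference is that you collapse the powers of $\sin\beta$ and $\cos\beta$ in one Euler-formula expansion, whereas the paper converts each power separately and then applies product-to-sum identities; the bookkeeping and conclusion are identical.
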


\begin{proof}
Consider the product
    \[
        \prod_{k=0}^3 \left(x_k + a_k \frac{\sin \beta}{r} + b_k \frac{\cos \beta}{r}\right)^{q_k}
    \]
from the right-hand side of~\cref{eq:NthMomentAsProduct}. Expansion via the binomial theorem yields a linear combination
of terms
\begin{equation}
\label{eq:GeneralPowerTerm}
\prod_{k=0}^3 x_k^{p_k} r^{-(c+d)} \sin^c{\beta} \cos^d{\beta},
\end{equation}
where $\sum_{k=0}^3 p_k + c + d = N$.

Using De Moivre's formula, Euler's formula, and the binomial theorem, we convert the powers $\sin^c{\beta}$ and
$\cos^d{\beta}$ to sums of $\sin(c^*\beta)$ and $\cos(d^*\beta)$, where $c^* = 0, \dots, c$ and $d^* = 0, \dots, d$.
Expanding again the product of these sums, and using the trigonometric addition identities, each
element~\labelcref{eq:GeneralPowerTerm} becomes a linear combination of terms
\begin{equation}
    \label{eq:GeneralMultTerm}
    \prod_{k=0}^3 x_k^{p_k} r^{-m} \sin \! {(n \beta)}, \quad \prod_{k=0}^3 x_k^{p_k} r^{-m} \cos \! {(n \beta)},
\end{equation}
where $m = c + d$, $\sum_{k=0}^3 p_k + m = N$, and $0 \le n \le m$.

Therefore, using the linearity of expectation, the right-hand side of~\cref{eq:NthMomentAsProduct} is a linear
combination of $E_{p_{0},p_{1},p_{2},p_{3},m,n}^{(s)}$ and $E_{p_{0},p_{1},p_{2},p_{3},m,n}^{(c)}$ defined
in~\cref{eq:ExpectedNotation}, which are the expected values of expressions~\labelcref{eq:GeneralMultTerm}.
\end{proof}

The expected values in~\cref{eq:ExpectedNotation} can be found in closed form by recursive application of the following
Lemma.

\begin{lem}[$N^\text{th}$ partial derivative of the GMGF]
    \label{lem:NthPartialDerivative}
    The $N^{\text{th}}$ partial derivative of the GMGF, conditioned by $\sum_{k=0}^3 p_k + m = N$, $p_j \ge 0$, and $m
    \ge 0$, is
    \begin{subequations}
    \label{eq:ExpectedStateExp}
    \begin{align}
        \mathbb{E}\left[ e^{\bm{h}^{\text{T}} \bm{x}} \prod_{k=0}^3 x_k^{p_k} \right]
            = &\frac{\partial^{N} \Phi_{\bm{x}}(\bm{h})}{\partial h_j^{p_j} \partial h_k^{p_k} \dots } \nonumber \\
            = &\sum_{j \ne k} p_j \sigma_{jk} \frac{\partial^{N-2} \Phi_{\bm{x}}(\bm{h})}{\partial h_j^{p_j-1} \partial h_k^{p_k-1} \dots} \label{eq:ExpectedStateExp:1} \\
              &+(p_k - 1) \sigma_{kk} \frac{\partial^{N-2} \Phi_{\bm{x}}(\bm{h})}{\partial h_j^{p_j} \partial h_k^{p_k-2} \dots} \label{eq:ExpectedStateExp:2} \\
              &+ \left(\mu_k + \Sigma_k \bm{h} \right) \frac{\partial^{N-1} \Phi_{\bm{x}}(\bm{h})}{\partial h_j^{p_j} \partial h_k^{p_k-1} \dots} \label{eq:ExpectedStateExp:3},
    \end{align}
    \end{subequations}
    where $k$ is selected such that $p_k > 1$, which is always possible for $N \ge 1$. The index $k$ is not specific,
    any index can be selected due to the symmetry of derivatives.

    For the sake of brevity, the notation in \cref{eq:ExpectedStateExp} technically allows negative derivatives of order
    $-1$ if $p_j = 0$ for some elements in \cref{eq:ExpectedStateExp:1}, or if $p_k = 1$ in
    \cref{eq:ExpectedStateExp:2}. However, such terms are eliminated by the corresponding $p_j = 0$ and $p_k - 1$
    multipliers in front.
\end{lem}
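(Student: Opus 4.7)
The plan is to prove the recursion in two natural steps. First, extend the first-order identity from the proof of \cref{lem:GeneralMGFXiExp}, $\mathbb{E}[x_j e^{\bm{h}^{\text{T}}\bm{x}}] = \partial \Phi_{\bm{x}}/\partial h_j$, to arbitrary mixed powers: by iteratively exchanging expectation and differentiation (valid because each pass multiplies the already-integrable Gaussian integrand by another $x_i$, still dominated by a Gaussian tail), one obtains $\mathbb{E}[e^{\bm{h}^{\text{T}}\bm{x}}\prod_k x_k^{p_k}] = \partial^N \Phi_{\bm{x}}(\bm{h})/\partial h_j^{p_j}\partial h_k^{p_k}\cdots$. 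This identifies the left-hand side of \cref{eq:ExpectedStateExp} with an $N$-fold partial derivative of the closed-form expression $\Phi_{\bm{x}}(\bm{h}) = \exp(\bm{\mu}^{\text{T}}\bm{h} + \tfrac{1}{2}\bm{h}^{\text{T}}\Sigma\bm{h})$ from \cref{prop:GeneralisedMGF}.

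Second, derive the recursion by peeling off a single derivative with respect to $h_k$, for any index $k$ with $p_k \ge 1$ (always possible when $N \ge 1$; when $p_k = 1$ the prefactor $p_k-1$ in line \labelcref{eq:ExpectedStateExp:2} harmlessly suppresses the formally undefined negative-order derivative, as the statement notes). Using $\partial \Phi_{\bm{x}}/\partial h_k = (\mu_k + \Sigma_k \bm{h}) \Phi_{\bm{x}}$ from the proof of \cref{lem:GeneralMGFXiExp} gives
\begin{equation}
\frac{\partial^N \Phi_{\bm{x}}}{\partial h_j^{p_j}\partial h_k^{p_k}\cdots} = \frac{\partial^{N-1}}{\partial h_j^{p_j}\partial h_k^{p_k-1}\cdots}\bigl[(\mu_k + \Sigma_k \bm{h})\,\Phi_{\bm{x}}(\bm{h})\bigr]. \nonumber
\end{equation}
I then apply the generalised Leibniz rule to this product. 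The crucial observation is that $g(\bm{h}) := \mu_k + \Sigma_k \bm{h}$ is \emph{linear} in $\bm{h}$, so every partial derivative of $g$ of order two or higher vanishes; only Leibniz multi-indices placing weight $0$ or $1$ on $g$ contribute.

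The three surviving cases produce the three lines of \cref{eq:ExpectedStateExp}. The weight-$0$ contribution leaves $g$ undifferentiated and sends all $N-1$ remaining derivatives to $\Phi_{\bm{x}}$, giving \labelcref{eq:ExpectedStateExp:3}. A weight-$1$ contribution acting on $g$ via $\partial g/\partial h_i = \sigma_{ki}$ for some $i \ne k$ redirects one of the $p_i$ available derivatives, producing the binomial factor $\binom{p_i}{1} = p_i$ and, after summing over $i$ and using $\sigma_{ki} = \sigma_{ik}$, reproducing line \labelcref{eq:ExpectedStateExp:1}. The remaining weight-$1$ contribution acts on $h_k$ itself, costing one of the $p_k-1$ remaining $h_k$ derivatives and yielding $(p_k-1)\sigma_{kk}$ times the $(N-2)$-th derivative with $p_k$ reduced by two, i.e.\ line \labelcref{eq:ExpectedStateExp:2}.

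The main obstacle is purely notational bookkeeping: executing the generalised Leibniz rule cleanly across the multi-index $(p_j,p_k-1,\dots)$ and correctly pairing combinatorial prefactors with the right $\sigma_{ij}$. No analytic subtleties arise, since $g\Phi_{\bm{x}}$ is entire in $\bm{h}$ and the exchange of expectation and differentiation is standard for the Gaussian measure. A sanity check is that for $N=1$ only \labelcref{eq:ExpectedStateExp:3} survives, recovering precisely the identity proved in \cref{lem:GeneralMGFXiExp}; the symmetry argument for independence of the choice of $k$ then follows from commutativity of partial derivatives.
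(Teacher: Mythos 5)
Your proof is correct, but it reaches the recursion by a genuinely different route than the paper. The paper proceeds by induction on $N$: it writes out the $N=1$ and $N=2$ cases explicitly from $\partial_{h_k}\Phi_{\bm{x}} = (\mu_k + \Sigma_k\bm{h})\Phi_{\bm{x}}$, then differentiates the assumed order-$N$ identity once more by an arbitrary $h_m$ and tracks how the new $\sigma_{km}$ summand (from differentiating the affine factor $\mu_k + \Sigma_k\bm{h}$) merges into the existing terms of \cref{eq:ExpectedStateExp:1} or \cref{eq:ExpectedStateExp:2}, raising $p_m$ to $p_m+1$. You instead peel off a single $h_k$-derivative and apply the generalised Leibniz rule to $(\mu_k+\Sigma_k\bm{h})\Phi_{\bm{x}}$ in one shot, using the linearity of $\mu_k+\Sigma_k\bm{h}$ to kill all but the weight-$0$ and weight-$1$ contributions; the three surviving classes map cleanly onto \cref{eq:ExpectedStateExp:3}, \cref{eq:ExpectedStateExp:1} (with $\binom{p_i}{1}=p_i$ and $\sigma_{ki}=\sigma_{ik}$), and \cref{eq:ExpectedStateExp:2} (with $\binom{p_k-1}{1}=p_k-1$). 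Your version buys a self-contained, non-inductive derivation whose combinatorial prefactors appear automatically as binomial coefficients, and it makes transparent why the $p_j=0$ and $p_k=1$ edge cases are harmless; the paper's version avoids invoking the multivariate Leibniz formula and stays closer to elementary repeated product-rule bookkeeping, at the cost of a more delicate inductive merging step. Your identification of the expectation with the $N$-fold derivative of $\Phi_{\bm{x}}$ via dominated convergence, and the $N=1$ sanity check recovering \cref{lem:GeneralMGFXiExp}, are both sound.
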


\begin{proof}
For $N = 1$, we re-write~\cref{eq:PartialDerivative} as
\begin{equation}
    \label{eq:FirstPartialDerivative}
    \frac{\partial}{\partial h_k} \Phi_{\bm{x}}(\bm{h})
        = \left(\mu_k + \Sigma_k \bm{h} \right) \Phi_{\bm{x}}(\bm{h}).
\end{equation}
For $N = 2$, we take the second partial derivative to obtain
\begin{align}
    \label{eq:SecondPartialDerivative}
    \frac{\partial^{2}}{\partial h_k \partial h_j} \Phi_{\bm{x}}(\bm{h})
        &= \frac{\partial}{\partial h_k} \left(\mu_j + \Sigma_j \bm{h} \right) \Phi_{\bm{x}}(\bm{h}) \nonumber \\
        &= \bigl(\sigma_{jk} + \left(\mu_j + \Sigma_j \bm{h} \right) \left(\mu_k + \Sigma_k \bm{h} \right) \bigr) \Phi_{\bm{x}}(\bm{h}) \nonumber \\
        &= \resizebox{0.55\hsize}{!}{%
        $\sigma_{jk} \Phi_{\bm{x}}(\bm{h}) + \left(\mu_k + \Sigma_k \bm{h} \right) \frac{\partial}{\partial h_j} \Phi_{\bm{x}}(\bm{h})$}.
\end{align}
These expressions start the recursion of \cref{eq:ExpectedStateExp}: for $j \ne k$, the first term of
\cref{eq:SecondPartialDerivative} yields \cref{eq:ExpectedStateExp:1}, and for $j = k$, it yields
\cref{eq:ExpectedStateExp:2}. \cref{eq:FirstPartialDerivative} and the last term of \cref{eq:SecondPartialDerivative}
produce \cref{eq:ExpectedStateExp:3} for the respective $N = 1, 2$.

The proof for a general order $N$ is obtained through direct differentiation of \cref{eq:ExpectedStateExp} by $h_m$ for
some arbitrary $m$: the partial derivatives in \crefrange{eq:ExpectedStateExp:1}{eq:ExpectedStateExp:3} increase the
order of their $h_m$-th derivatives by one. Then the derivative of the $\mu_k + \Sigma_k \bm{h}$ term in
\cref{eq:ExpectedStateExp:3} yields one more summand with a $\sigma_{km}$ multiplier. If $k \ne m$, it is combined with
the corresponding $j = m$ term of \cref{eq:ExpectedStateExp:1} to increase its $p_m$ multiplier in front to $p'_m = p_m
+ 1$. Alternatively, for $k = m$ the new summand is combined with \cref{eq:ExpectedStateExp:2} to increase the $p_k - 1
= p_m - 1$ multiplier to $p_m = p'_m - 1$. In both cases, we obtain \cref{eq:ExpectedStateExp} with substitutions $N
\Rightarrow N' = N + 1$ and $p_m \Rightarrow p'_m = p_m + 1$, which is the next order of recursion.
\end{proof}

Substituting $\bm{h} = [in, 0, 0, -m]$ into \cref{eq:ExpectedStateExp} recursively yields the expected values
from~\cref{eq:ExpectedNotation}.

By \cref{prop:GeneralNthMoment}, any $N^\text{th}$ order raw moment is
\begin{align}
    \label{eq:NthMomentAsSum}
    \mathbb{E}\left[\prod_{k=0}^3 (x_k^+)^{q_k}\right] &=
    \sum_{\{q,m\}} \sum_{n=0}^m \left(S_{p_{0}, \dots, p_{3},m,n}^{(p_0, \dots, p_3)} E_{p_{0}, \dots, p_{3},m,n}^{(s)} \right. \nonumber \\
    & \left. + C_{p_{0}, \dots, p_{3},m,n}^{(p_0, \dots, p_3)} E_{p_{0}, \dots, p_{3},m,n}^{(c)} \right).
\end{align}
The first sum considers all possible combinations of $p_k$, $k=0,\dots,3$ and $m$ subject to the conditions
$\sum_{k=0}^3 p_k + m = N$, $p_k \ge 0$, $m \ge 0$. The coefficients $S_{p_{0}, \dots, p_{3},m,n}^{(p_0, \dots, p_3)}$
and $C_{p_{0}, \dots, p_{3},m,n}^{(p_0, \dots, p_3)}$ are products of $a_k$ and $b_k$ and can be found
from~\cref{eq:NthMomentAsProduct,eq:GeneralPosterior}.

We demonstrated that any higher-order moment of the post-manoeuvre distribution can be found in closed form by recursive
application of~\cref{eq:ExpectedStateExp,eq:NthMomentAsSum}. However, as the order of the moment increases, so does the
complexity of its closed-form expression. Explicit calculation of the coefficients $S_{p_{0}, \dots, p_{3},m,n}^{(p_0^+,
\dots, p_3^+)}$ and $C_{p_{0}, \dots, p_{3},m,n}^{(p_0^+, \dots, p_3^+)}$ quickly becomes prohibitive. Combined with the
non-Gaussianity proven in~\cref{sec:NotGaussian}, such that we have no existing statistical results to draw on, we turn
to computer algebra for deriving specific cases of~\cref{eq:NthMomentAsSum}. We have developed a package in SageMath
\cite{TheSageDevelopers:2023:SageMath} to generate these moments and export them as LaTeX or Python code. A
demonstration of this package is shown in \cref{app:ExampleMoment}.

The first two moments are explicitly given by~\cref{eq:GeneralMean,eq:GeneralXiiXij}. In the next section, we apply
these two moments in a modified UKF estimator, while using the higher-order moments to monitor its performance.

\section{Applications of the Closed-Form Expressions in Tracking Simulation\label{sec:Applications}}
\subsection{UKF using Closed-Form Expressions for Mean and Covariance Prediction\label{sec:ModifiedUKF}}
The Kalman Filter provides a minimum square error estimate for the mean and covariance of the target
location~\cite{Labbe:2014:Kalman}. For our non-linear dynamical system, the UKF is appropriate. It uses the
UT to propagate deterministic sigma points that accurately capture the target state mean and
covariance~\cite{Wan:2000:Unscented}.

In order to test our closed-form expressions in tracking simulation, we will implement a pure UKF and a modified UKF
that incorporates our closed-form expressions. We assume the ownship alternates between straight legs and instant turns.
The modified UKF employs the first two moments of the post-manoeuvre distribution to calculate the target state and
covariance after each ownship turn.

We define the ownship manoeuvre at time step $i$ as a tuple of three values:
\begin{itemize}
    \item $\Delta \bm{r}_i$, the accelerated displacement during the step illustrated in \cref{fig:ThreeStepManoeuvre}.
    \item $\Delta \bm{v}_i$, the speed change during the step; the speed changes instantly at the point of the turn, at
        unknown time within the step.
    \item $\Delta t_i$, the time duration of the step.
\end{itemize}
As the turn may happen at a time instance within the step, we seek to split the manoeuvre into elementary movements, or
sub-manoeuvres: a straight leg, an instant turn, and another straight leg after the turn. The duration of the two
straight legs adds up to the total duration of the time step $\Delta t_i$. Computation of these sub-manoeuvres is
described in~\cref{thm:SplitManoeuvre}.

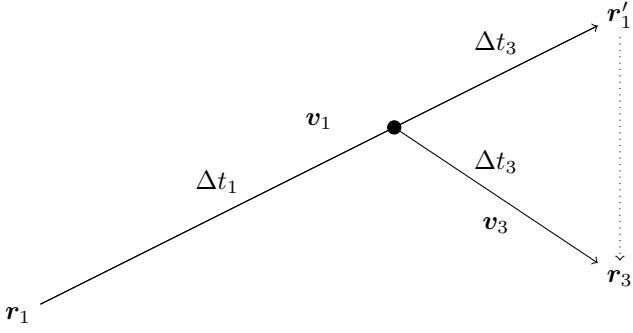
\begin{figure}[!t]
    \centering
\begin{tikzpicture}
    \node (r1) at (0,0) {$\bm{r}_1$};
    \node (r1hat) at (8,4) {$\bm{r}_1'$};
    \node (r3) at (8, 0.5) {$\bm{r}_3$};
    \node[circle, draw=black, fill=black, inner sep=0pt, minimum size=5pt] (b) at (5,2.5) {};
    \draw (r1) -- (r1hat) node [midway, above=10pt] {$\bm{v}_1$};
    \draw [->] (r1) -- (5,2.5) node [midway, above=5pt] {$\Delta t_1$};
    \draw [->] (5,2.5) -- (r1hat) node [midway, above=5pt] {$\Delta t_3$};
    \draw [->] (5,2.5) -- (r3) node [midway, above=5pt] {$\Delta t_3$} node [midway, below=5pt] {$\bm{v}_3$};
    \draw[dotted] [->] (r1hat) -- (r3);
  \end{tikzpicture}
  \caption{A diagram of the predicted ownship position, $\bm{r}_1'$, versus the true position, $\bm{r}_3$, where the
    black node indicates the point of the instant turn. Note the jump from index $1$ to $3$ as the second sub-manoeuvre,
    the instant turn, has zero duration, zero accelerated displacement (hence, zero length on the plot), but non-zero
    change in velocity: $\Delta \bm{v} = \bm{v}_3 - \bm{v}_1$. The accelerated displacement is $\Delta \bm{r} = \bm{r}_3
    - \bm{r}_1'$.}
  \label{fig:ThreeStepManoeuvre}
\end{figure}

\begin{thm}[Manoeuvre splitting]
    \label{thm:SplitManoeuvre}
    The manoeuvre containing a turn $M(\Delta \bm{r}, \Delta \bm{v}, \Delta t)$ is equivalently described as a list of
    three sub-manoeuvres:
\begin{enumerate}
    \item $M_1(0, 0, \Delta t_1)$,
    \item $M_2(0, \Delta \bm{v}, 0)$, and
    \item $M_3(0, 0, \Delta t_3)$
\end{enumerate}
where
\begin{equation}
    \label{eq:DeltaT3FromMinimisation}
    \Delta t_3 = \frac{ \Delta \bm{r} \cdot \Delta \bm{v} }{|\Delta \bm{v}|^2}
\end{equation}
and $\Delta t_1 = \Delta t - \Delta t_3$.
\end{thm}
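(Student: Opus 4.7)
The plan is to match the accelerated displacement generated by executing the three sub-manoeuvres in sequence to the target value $\Delta\bm{r}$, subject to the time bookkeeping constraint $\Delta t_1 + \Delta t_3 = \Delta t$. First I would fix notation: let $\bm{v}_1$ denote the ownship velocity at the start of the manoeuvre, so that after the instant turn $M_2$ the velocity becomes $\bm{v}_3 = \bm{v}_1 + \Delta\bm{v}$.

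Next, I would compute the Cartesian displacement produced by the three sub-manoeuvres. During $M_1$ the ownship moves at constant velocity $\bm{v}_1$ for time $\Delta t_1$, contributing $\bm{v}_1\Delta t_1$; the instant turn $M_2$ contributes zero displacement; during $M_3$ it moves at constant velocity $\bm{v}_3$ for time $\Delta t_3$, contributing $\bm{v}_3\Delta t_3$. Applying the definition of accelerated displacement from \cref{eq:CartDisplacement}, I would subtract the constant-velocity baseline $\bm{v}_1\Delta t$ from this total and use $\Delta t_1 = \Delta t - \Delta t_3$ to simplify. After cancellation, the three-step accelerated displacement collapses to $\Delta\bm{v}\,\Delta t_3$.

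To complete the equivalence, this quantity should equal $\Delta\bm{r}$. Since there is only one scalar unknown $\Delta t_3$ but $\Delta\bm{r}$ is a two-dimensional vector, an exact match is possible only when $\Delta\bm{r}$ is parallel to $\Delta\bm{v}$, which is the idealised single-turn case. For a general $\Delta\bm{r}$ (as arises when discretising a gradual manoeuvre in the sense of \cref{sec:ExplicitManoeuvre}), I would select $\Delta t_3$ by minimising $\|\Delta\bm{v}\,\Delta t_3 - \Delta\bm{r}\|^2$. Differentiating with respect to the scalar $\Delta t_3$ and setting the derivative to zero gives $|\Delta\bm{v}|^2\Delta t_3 = \Delta\bm{r}\cdot\Delta\bm{v}$, which is precisely \cref{eq:DeltaT3FromMinimisation}; the complementary formula $\Delta t_1 = \Delta t - \Delta t_3$ follows immediately from the duration constraint.

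The main obstacle is conceptual rather than algebraic: recognising that the three-step decomposition carries only one free scalar parameter, so matching it to a vector target $\Delta\bm{r}$ is in general an overdetermined problem that must be interpreted as a least-squares projection of $\Delta\bm{r}$ onto the line spanned by $\Delta\bm{v}$. Once this interpretation is accepted, the derivation reduces to a one-variable quadratic minimisation, and both $\Delta t_3$ and $\Delta t_1$ are read off directly.
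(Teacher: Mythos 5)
Your derivation is correct, and the core computation is the same as the paper's: both of you show that the three-step sequence produces an accelerated displacement of $\Delta\bm{v}\,\Delta t_3$ (the paper by writing $\bm{r}_3 = \bm{r}_1 + \Delta t_1\bm{v}_1 + \Delta t_3\bm{v}_3$ and regrouping; you by summing the sub-manoeuvre displacements and subtracting the baseline $\bm{v}_1\Delta t$), and both extract $\Delta t_3$ by taking the dot product with $\Delta\bm{v}$. Where you diverge is in the justification of that last step. The paper treats the identity $\Delta\bm{r} = \Delta t_3\,\Delta\bm{v}$ as exact --- which it is when the recorded manoeuvre genuinely consists of two straight legs joined by one instant turn, so that $\Delta\bm{r}$ is automatically parallel to $\Delta\bm{v}$ --- and simply multiplies both sides by $\Delta\bm{v}$ to isolate the scalar. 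You instead observe that for an arbitrary input tuple the vector equation is overdetermined in the single unknown $\Delta t_3$, and you recover \cref{eq:DeltaT3FromMinimisation} as the least-squares projection of $\Delta\bm{r}$ onto the line spanned by $\Delta\bm{v}$. This is a genuine strengthening: it explains why the formula remains a sensible choice when the manoeuvre data come from discretising a gradual turn (where exact parallelism fails), a case the paper's proof silently assumes away even though its own equation label (``FromMinimisation'') hints at the same interpretation. The paper's version is shorter and suffices for the stated equivalence; yours buys robustness to the approximation setting of \cref{sec:ExplicitManoeuvre} at the cost of weakening ``equivalently described'' to ``best approximated by'' in the non-parallel case --- a caveat worth stating explicitly if you adopt the least-squares framing.
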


\begin{proof}
    We write the final position in terms of the initial position and two velocities.
    \[
        \bm{r}_3 = \bm{r}_1 + \Delta t_1 \bm{v}_1 + \Delta t_3 \bm{v}_3.
    \]
    Substituting $\Delta t_3 = \Delta t - \Delta t_1$ and regrouping, we get
    \[
        \bm{r}_3 - (\bm{r}_1 + \Delta t \bm{v}_1) = \Delta t_3 (\bm{v}_3 - \bm{v}_1).
    \]
    As illustrated in~\cref{fig:ThreeStepManoeuvre}, the left-hand side is the accelerated displacement $\Delta \bm{r}$;
    the term in brackets in the right-hand side is the speed change $\Delta \bm{v}$. Therefore,
    \[
        \Delta \bm{r} = \Delta t_3 \Delta \bm{v}.
    \]
    By assumption of a non-zero turn, $|\Delta \bm{v}| \ne 0$. Thus, multiplying both sides by $\Delta \bm{v}$ we obtain
    \cref{eq:DeltaT3FromMinimisation}.
\end{proof}

In cases where the turn is at the start or at the end of the total manoeuvre, without loss of generality, we split the
total manoeuvre into two: a turn and a straight leg.

Having established the approach for isolating instant turns, we present the modified filter, CFE-UKF,
as~\cref{alg:ModifiedUKF}.
\begin{algorithm}[!ht]
    \caption{\label{alg:ModifiedUKF} The modified UKF algorithm, \texttt{CFE-UKF}, that uses the closed-form expressions
    \labelcref{eq:GeneralMean,eq:GeneralXiiXij} to update the mean and covariance over instant turns.}
    \begin{algorithmic}[1]
        \Function{CFE-UKF}{}
            \State Initialise at time $t_0$, iteration $0$
            \State $N_T$ is the total count of time steps
            \For{$i \in [1, N_T]$}
                \State Advance platforms forward by time step $\Delta t_i$
                \State Compute the manoeuvre $M_i(\Delta \bm{r}_i, \Delta \bm{v}_i, \Delta t_i)$
                \If{$|\Delta \bm{v}_i| < \epsilon$} \Comment{no turn}
                    \State Time update over $\Delta t_i$ using UT
                \Else
                    \State Split $M_i$ into sub-manoeuvres $[M_j]$
                    \ForAll{$M_j \in [M_j]}$
                        \If{$\Delta \bm{v}_j \neq \bm{0}$} \Comment{ownship turned}
                            \State State update by~\cref{eq:GeneralMean,eq:GeneralXiiXij}
                        \Else \Comment{straight leg}
                            \State Time update over $\Delta t_j$ using UT
                        \EndIf
                    \EndFor
                \EndIf
                \State Measurement update of the \texttt{UKF} as usual
            \EndFor
        \EndFunction
    \end{algorithmic}
\end{algorithm}

\begin{figure}[!t]
    \centering
    \includegraphics[width=3.25in]{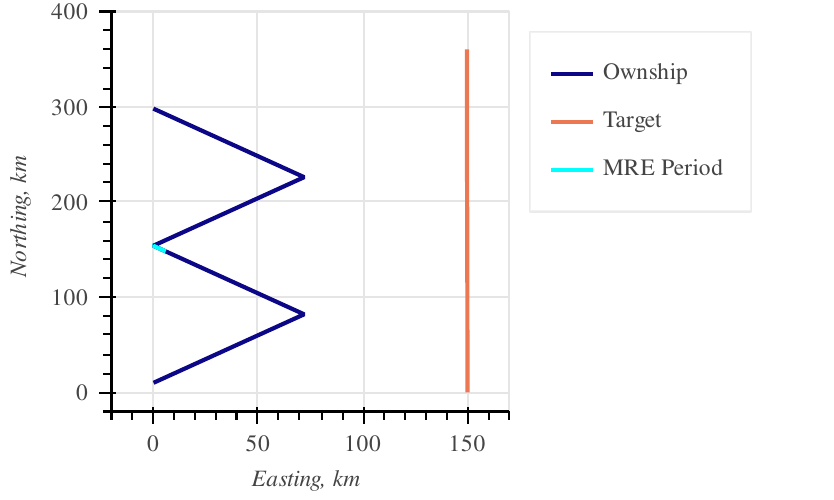}
    \caption{Motion of the ownship and target in the simulated scenario. In Cartesian coordinates, the ownship velocity
    along straight legs is $[\pm10, 10]$~m/s and the initial target velocity is $[0, 12.5]$~m/s. Each straight leg is
    2~hours long. The short cyan segment shows the stretch of time used for mean range error computations in
    \cref{sec:MomentMonitoring}.}
    \label{fig:OTScenario}
\end{figure}

To compare the CFE-UKF with a pure UKF, we use a scenario presented in \cref{fig:OTScenario}. The ownship moves
deterministically along straight lines, and makes exact turns, while the target movement is subject to the process noise
provided by the continuous-time white noise model with nearly constant velocity~\cite{Bar-Shalom:2001:Estimation}. The
time increment $\Delta t$ is 10~seconds.

For the purposes of verification, we first run the CFE-UKF and UKF with artificially precise parameters.
The process noise intensity is set to $10^{-8}$ and the sensor standard
deviation of bearing measurements to 0.001\degree. We set the initial track range to $150$~km (the true range from
ownship to target) and initialise the range variance to $10$\% of the true range. In this
idealised case, we find both trackers run very similarly: the averaged absolute difference between their means
is only $[1.3 \cdot 10^{-9}, 1.8 \cdot 10^{-11}, 3.9 \cdot 10^{-10}, 1.8 \cdot 10^{-5}]$.

The CFE-UKF is slightly less efficient than a pure UKF; for each turning point, we perform up to two UTs, depending on
whether there is a preceding \textit{and} succeeding straight leg, and then additionally compute the closed-form
expressions to update the mean and covariance over the turn. However, due to our small state space dimension $n_x = 4$,
and the small proportion of ownship turns relative to the total scenario, this increase in complexity has minimal
effect on total run-time.

\subsection{Monitoring Gaussianity Using the Third and Fourth Central Moments of the Post-Manoeuvre
Distribution\label{sec:MomentMonitoring}}
In~\cref{sec:NotGaussian} we have shown that the post-manoeuvre distribution is not Gaussian. Now we can quantify this
statement. When the ownship turns, we know the higher-order moments of the post-manoeuvre state distribution, and can
use them to monitor how close is this distribution to Gaussian. The turning points are critical for passive BO tracking
because only there the range becomes observable, and therefore, it is vital for the estimator assumptions, \textit{e.g.}
Gaussianity, to be satisfied.

We consider two metrics for use as a monitoring tool. The first it is the Frobenius norm of the third central moment
tensor
\begin{equation}
    \label{eq:M3}
    M_3 = \big\lVert \, \mathbb{E} \left[ (\bm{x}^+ - \bm{\mu}^+)^3 \right] \! \big\rVert.
\end{equation}
For a Gaussian distribution $M_3 \equiv 0$, hence a non-zero value is a measure of non-Gaussianity.

The second metric is the norm of the deviation of the post-manoeuvre fourth central moment, a $4^\text{th}$-order
tensor, from the expected fourth
central moment of a Gaussian distribution:
\begin{equation}
    \label{eq:M4}
    M_4 = \big\lVert \, \mathbb{E} \! \left[ (\bm{x}^+ - \bm{\mu}^+)^4 \right] - \mathbb{E} \! \left[ (\bm{y} - \bm{\mu}^+)^4 \right] \! \big\rVert,
\end{equation}
where $y \sim \mathcal{N}(\bm{\mu}^+, \Sigma^+)$. Again, $M_4 \equiv 0$ for a Gaussian distribution; a non-zero value
quantifies non-Gaussianity.

\Cref{fig:MomentTimeSeries} shows the time series of the proposed metrics for the scenario of~\cref{fig:OTScenario}. To
make the simulations somewhat less idealised, from now on the process noise intensity is increased to $10^{-6}$ and the bearing
measurement error to~0.1\degree. Clearly, the distribution is not Gaussian when the ownship turns. On the
positive side, we observe that with every subsequent turn the metrics decrease, implying that the distribution becomes
more Gaussian with time as the estimators converge. Consequently, it may imply that the observed non-Gaussianity is
related to the estimator initialisation, and that is what we examine next.

\begin{figure}[!t]
    \centering
    \includegraphics[width=3.5in]{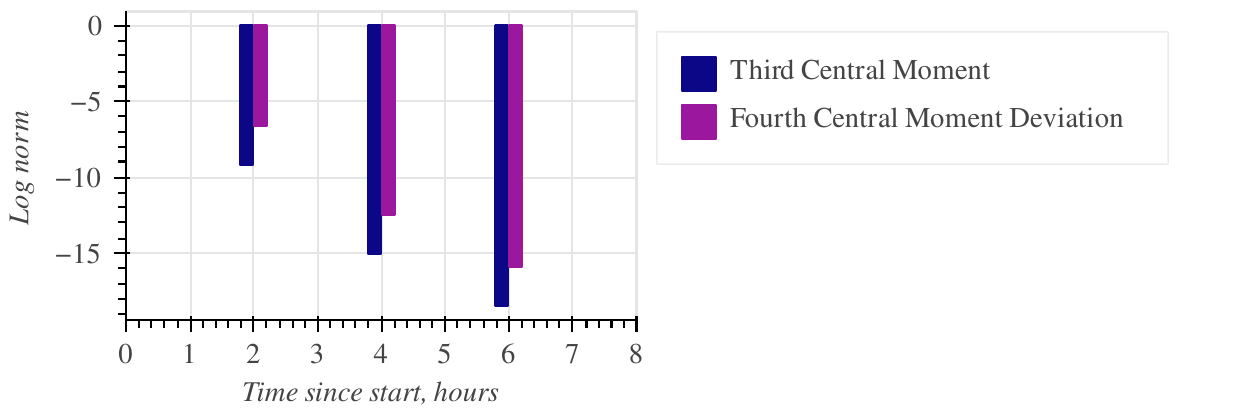}
\caption{Time series of the two non-Gaussianity metrics. The three peaks correspond to the three ownship turns (bars
have been offset for visibility).}
\label{fig:MomentTimeSeries}
\end{figure}

\subsection{Effect of Range Error Initialisation on the Post-Manoeuvre Distribution
\label{sec:RangeInitialisation}}
To start the estimator, we need to provide the initial state $\bm{x}_0$ and covariance $\Sigma_0$. In the examples
presented in this paper, we set $\bm{x}_0$ using the first meaured bearing for $\beta_0$, zeros for the derivatives
$\dot{\beta}_0$ and $\dot{\rho}_0$, and the true value for the initial range $r_0$. The covariance $\Sigma_0$ is set to
a diagonal matrix of
\begin{itemize}
    \item the sensor measurement variance for bearing,
    \item ${(v_{\text{max}} / r_0)}^2$ for the bearing and scaled range rates, where $v_{\text{max}}$ is some ``large''
        speed value, and
    \item $\sigma_{rr} = \log{(1 + {(\delta r)}^2)}$ for log-range, were we introduce $\delta r$ as the initial relative
        range error.
\end{itemize}
We express the log-range error in terms of $\delta r$, because in the limit of small errors, the range variance of
${(r_0 \delta r)}^2$ indeed corresponds to the log-range variance of $\sigma_{rr}$. For larger errors the relationship breaks
down, but $\delta r$ still provides a convenient description of the error in human terms, as opposed to the less
intuitive variance of the range logarithm.

To examine the impact of initialisation on the tracking accuracy and on the properties of the post-manoeuvre
distribution, we consider the estimator output after the first manoeuvre of our scenario, where the non-Gaussianity is
most pronounced. As a measure of accuracy, we use the mean estimated range error (MRE) calculated over ten minutes of
the scenario at the end of the second leg (highlighted in \cref{fig:OTScenario} with cyan).

\begin{figure}[!t]
    \centering
    \begin{subfigure}{3.5in}
        \hspace{-0.2cm} \centering
        \includegraphics[width=3.5in]{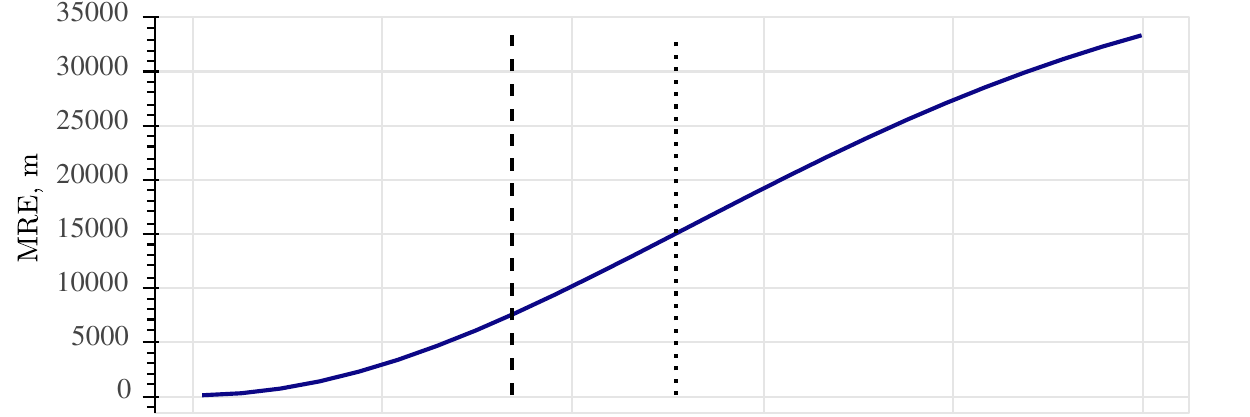}
        \phantomsubcaption{\label{fig:RangeErrorMRE}}
    \end{subfigure}\vspace{-0.4cm}%

    \begin{subfigure}{3.5in}
        \centering
        \includegraphics[width=3.5in]{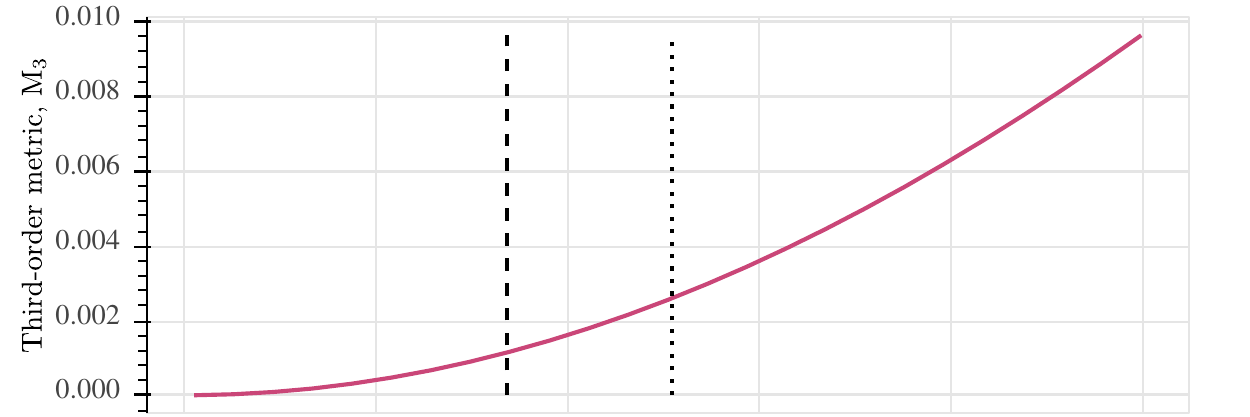}
        \phantomsubcaption{\label{fig:TMEffect}}
    \end{subfigure}\vspace{-0.4cm}%

    \begin{subfigure}{3.5in}
        \centering
        \includegraphics[width=3.5in,height=1.4in]{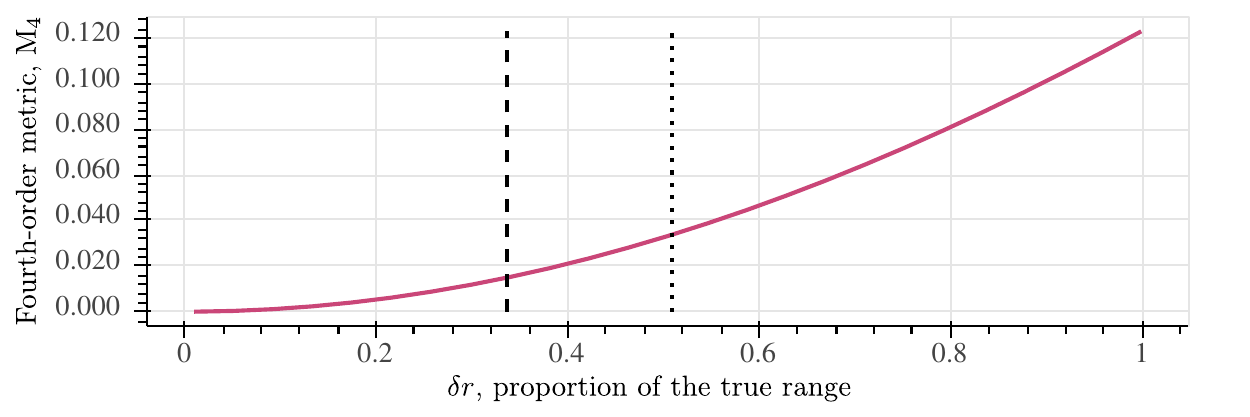}
        \phantomsubcaption{\label{fig:FMEffect}}
    \end{subfigure}

    \caption{Effect of the initial range variance on MRE (a, top), third central moment (b, middle), and fourth central
    moment difference (c, bottom). The dashed and dotted lines respectively show the thresholds to achieve the MRE of
    $5\%$ and $10\%$ of the initial range.}
    \label{fig:RangeErrorRelationships}
\end{figure}

The dependence of the MRE on tracker initialisation is presented in \cref{fig:RangeErrorMRE}. The corresponding two
metrics of target state Gaussianity after the first ownship turn are shown in \cref{fig:TMEffect,fig:FMEffect}. As
expected, estimation error grows with initialisation ambiguity.  Correspondingly, the post-manoeuvre state distribution
becomes less Gaussian, thus violating the UKF assumptions and reducing estimator accuracy.

A corollary of the presented result is that monitoring non-Gaussianity enables a measure of control of the estimator
performance. This is the motivation for the CFE-UKF. While it may not perform that differently from the pure UKF, the
new estimator provides the user with a capability to monitor whether the underlying assumptions of the Gaussian
estimator are being met.

Whenever the non-Gaussianity metrics exceed certain thresholds\footnote{ The specific thresholds for $M_3$ and $M_4$
metrics can be established by Monte-Carlo modelling, but are beyond the scope of this paper.}, the user would know that
the target range estimate after the manoeuvre cannot be trusted, even if the estimated range variance is low. In such a
case, the ownship may, for example, attempt the next manoeuvre to improve the solution. \Cref{fig:MomentTimeSeries}
shows how Gaussianity of the post-manoeuvre distribution improves on each turn, thus validating the estimator
convergence. Alternatively, the same Gaussianity metrics can guide the choice of the initial range variance $\delta r$.

The focus of this section has been the use of our results in a Kalman Filter framework. One may consider more powerful
filtering algorithms that do not assume Gaussianity of the target state; these are discussed in the next section.

\section{Conclusions and Future Work\label{sec:Conclude}}

The paper extends the theoretical understanding of object tracking in LPC. In order to make the theory tractable, we
assume an instantaneous manoeuvre and a target moving in a straight line. Under these simplifications, we obtain some
strong results. The main contribution of the paper is~\cref{prop:GeneralNthMoment}: we derive all moments of the
post-manoeuvre distribution in closed form. Additionally,~\Cref{prop:dist}~derives, for both LPC and MPC, special cases
in which the post-manoeuvre distribution is Gaussian, with known parameters.  Our results, while formally obtained for
instantaneous manoeuvres, may be applied in the general, non-instantaneous case through discretisation.

As the contributions of the paper are primarily theoretical, let us now discuss possible applications of the theory. The
most common class of tracking algorithms, consisting of extensions of the Kalman Filter for non-linear system dynamics,
work best when the state distribution of the target is approximately unimodal and without large outliers. Our
contribution allows a user to track the third and fourth moments of the LPC distribution, and thereby assess when these
conditions are met.

We demonstrate the utility of the obtained results in filtering algorithms. For state and covariance prediction, we
define a new algorithm, the CFE-UKF, that uses the closed-form expressions for mean and covariance in place of the
UT-based time update over an instant ownship turn. By comparing with the well-understood UKF in a simple tracking
scenario, we confirm our results for the first two moments of the post-manoeuvre distribution are correct.  However, in
addition to the mean and covariance, our estimator also computes the third and fourth post-manoeuvre moments to allow
monitoring of the quality of the Gaussian approximation, an ability that the pure UKF lacks. This insight gives the
CFE-UKF an edge over the UT-based estimator.

With this understanding, we conclude that there is a need to evaluate algorithms that do not blindly assume Gaussianity
of the underlying target state distribution. Similar to the approaches presented in \cite{Zanetti:2018:Novel}, we could
use our higher-order moment metrics as splitting criteria for a Gaussian sum filter; a threshold established by
Monte-Carlo modelling can determine when the estimator would benefit from a split of the prior when the ownship turns.

Alternatively, the conditional Gaussianity properties proved in \cref{prop:dist} empower the use of a Rao-Blackwell-type
particle filter; a particle filter would track the bearing rate and scaled range rate, while bearing and log-range are
conditionally Gaussian and can be tracked with a Kalman Filter.

\section*{Acknowledgements}
We thank Acacia Systems for their support and funding.

\begin{appendices}
\section{Supplemental details for~\cref{sec:NotGaussian}}
\label{app:vis_man}
\subsection*{1. Proof of~\cref{prop:dist}}
\begin{proof}
Claims (1)(i) and (2)(i), that the post-manoeuvre distributions of MPC and LPC are non-Gaussian, are clearly visible
from the terms in $\cos \beta$ and $\sin \beta$, which are entangled with $\dot\beta$ and $\dot\rho$ in the exponent of
the post-manoeuvre distribution. Let $P_{i,j}$ refer to the $i,j^\text{th}$ entry of the precision matrix $\Sigma^{-1}$; then
the exponent of~\cref{eq:MultivariateGaussian} contains a term
\[(\dot\rho - \mu_\rho + s\Delta v_x \sin \beta + s\Delta v_y \cos \beta)^2 P_{2,2}\;. \]
Expanding this quadratic reveals terms like, for example, $\dot\rho s P_{2,2} \sin\beta $, that render the distribution
non-Gaussian. One might wonder whether these terms can cancel with other terms in the distribution --- but cancellation
is impossible without requiring special conditions on the precision matrix entries.

Claim (1)(ii), that the MPC distribution is known exactly if we condition on the bearing $\beta$, is proven as follows.
As in the Proposition, let $c_1$ and $c_2$ refer to the constant factors in $f_1$ and $f_2$.
Accordingly,~\cref{eq:MPCTransitionInverse} in MPC becomes
\[
\bm{h}^{-1}(\bm{x}^+) = \begin{bmatrix}
                        \beta^+\\
                        \dot{\beta}^+ + c_1s^+\\
                        \dot{\rho}^+ + c_2s^+\\
                        s^+
    \end{bmatrix}\;,
\]
and on substitution into~\cref{eq:MultivariateGaussian} and conditioning on $\beta$, the exponent of the post-manoeuvre
distribution is
\[-\frac{1}{2}\left[
\left(
\begin{bmatrix}
                        \dot{\beta}^+ + c_1s^+\\
                        \dot{\rho}^+ + c_2s^+\\
                        s^+
    \end{bmatrix} - \bm{\mu}
\right)^{^{\text{T}}}
\Sigma^{-1}_{-\beta}
\left(
\begin{bmatrix}
                        \dot{\beta}^+ + c_1s^+\\
                        \dot{\rho}^+ + c_2s^+\\
                        s^+
    \end{bmatrix} - \bm{\mu}
\right)
\right]. \]
    We now write the vectors that pre- and post-multiply $\Sigma$ as linear transformations of Gaussian innovations. Let
    $d_3 = s^+ - \mu_s$, then
\[ \dot{\beta}^+ + c_1s^+ - \mu_{\dot\beta} = \dot\beta - (\mu_{\dot\beta} - c_1\mu_s) + c_1d_3,  \]
\[ \dot{\rho}^+ + c_2s^+ - \mu_{\dot\rho} = \dot\rho - (\mu_{\dot\rho} - c_2\mu_s) + c_2d_3.  \]
Accordingly, let $d_1 = \dot\beta - (\mu_{\dot\beta} - c_1\mu_s)$ and $d_2 = \dot\rho - (\mu_{\dot\rho} - c_2\mu_s)$,
then the exponent simplifies to
\[-\frac{1}{2}\left[
\begin{pmatrix} d_1+c_1d_3\\d_2+c_2d_3\\d_3 \end{pmatrix}
^{\text{T}}
\Sigma^{-1}
\begin{pmatrix} d_1+c_1d_3\\d_2+c_2d_3\\d_3 \end{pmatrix}
\right]. \]
Clearly, the vectors are a linear combination of $d_1,\,d_2,\,d_3$. Recalling the definition of $\mathbf{P}$ from the
Proposition, we get
\[\begin{pmatrix} d_1+c_1d_3\\d_2+c_2d_3\\d_3 \end{pmatrix}
=
\begin{pmatrix} 1 & 0 & c_1\\
0 & 1 & c_2\\
0 & 0 & 1 \end{pmatrix} \begin{pmatrix} d_1\\d_2\\d_3 \end{pmatrix}
=
 \mathbf{P}^{-1} \begin{pmatrix} d_1\\d_2\\d_3 \end{pmatrix} \]
  and simplify the exponent to
\[-\frac{1}{2}\left[
\begin{pmatrix} d_1\\d_2\\d_3 \end{pmatrix}
^{\text{T}} \left(\mathbf{P}^{-1}\right)^{\text{T}}
\Sigma^{-1}
\mathbf{P}^{-1}\begin{pmatrix} d_1\\d_2\\d_3 \end{pmatrix}
\right] \;.\]
Recognising that  $\left(\mathbf{P}^{-1}\right)^{\text{T}}
\Sigma^{-1}
\mathbf{P}^{-1} = \left(\mathbf{P}
\Sigma
\mathbf{P}^{\text{T}}\right)^{-1}$, we conclude the stated result.

Claim (2)(ii), on the Gaussian distribution for LPC if conditioned on $\beta$ and $\rho$, is trivial: simply recognise
that the terms $f_1(\beta,\rho)$ and $f_2(\beta,\rho)$ are fixed and can be thought of as modifications to the means of
$\dot\beta$ and $\dot\rho$.
\end{proof}

\subsection*{2. Details of~\cref{fig:vis_man,fig:vis_man_far}}
Each Figure considers a Gaussian prior in MPC / LPC coordinates using a diagonal $\Sigma$. The main challenge is to select
the mean and variance for $s$ and $\rho$ so that the priors are roughly comparable between the two coordinate systems.
In order to do so, we select an initial location for the target in Cartesian coordinates --- $(0.1,0.15)$~km
in~\cref{fig:vis_man} and $(1,1.5)$~km in~\cref{fig:vis_man_far} --- and an initial standard deviation for target location
uncertainty of $0.4$~km in Cartesian space. We sample 10,000 times from this Gaussian prior in Cartesian coordinates,
transform each sample into $s$ and $\rho$, and use the variance of the transformed samples as the variance of the prior
in LPC / MPC. The resulting standard deviations, and all other parameters for the figures, are shown in~\cref{tab:figs}.
The figures are then obtained by sampling 10,000 times from each prior and transforming those samples: label the $i^\text{th}$
sample $\bm{x}_i$, then the figures plot $\bm{x}_i^+ = \bm{h}(\bm{x}_i)$ for all $i$.

\begin{table}
\begin{center}
\begin{tabular}{ c c | c c c c c c c c }
 & & $\mu_\beta$ & $\mu_{\dot\beta}$ & $\mu_{\dot\rho}$ & $\mu_{s/\rho}$ & $\sigma^2_\beta$ & $\sigma^2_{\dot\beta}$ & $\sigma^2_{\dot\rho}$ & $\sigma^2_{s/\rho}$ \\
\begin{multirow}{2}{1.5em}{~\cref{fig:vis_man}
}\end{multirow}
& MPC & 0.58  & 1.1 & 0.1  & 5.5 & 0.05  & 0.2 & 0.2 & 4.0\\
& LPC  & 0.58  & 1.1 & 0.1  & -1.7 & 0.05  & 0.2 & 0.2 & 0.63 \\
 \hline
\begin{multirow}{2}{1.5em}{~\cref{fig:vis_man_far}
}\end{multirow}
& MPC & 0.58  & 1.1 & 0.1  & 0.55 & 0.05  & 0.2 & 0.2 & 0.14\\
& LPC  & 0.58  & 1.1 & 0.1  & 0.59 & 0.05  & 0.2 & 0.2 & 0.23
\end{tabular}
\end{center}
\caption{Means and variances for Gaussian priors, obtained as described in~\cref{app:vis_man}.}
\label{tab:figs}
\end{table}

\section{Demonstration of Higher-Order Moment Generation via Computer Algebra}
\label{app:ExampleMoment}
Our SageMath package, \href{https://github.com/athenax-acacia/seaweed}{\texttt{seaweed} \faGitSquare}, exports any
moment of the post-manoeuvre target state distribution as Python or LaTeX code. For example, we present the third raw
moment of bearing rate in LaTeX in~\cref{eq:ThirdRawBearingRate}. We have manually written the expected value on the
left-hand side, but we directly obtain the right-hand side by calling the function \texttt{moment\_0300/latex\_0300} in the
pre-generated \texttt{post\_manoeuvre} module.

There is a minor stylistic difference between the generated moments and the main body of this paper; the two components
of ownship speed change, $\Delta v_x$ and $\Delta v_y$, are respectively written as $\Delta v_0$ and $\Delta v_1$. We
have also manually added line breaks and alignment markers to fit the equation to the page.

\begin{figure*}[!t]
\begin{alignat}{3}
\label{eq:ThirdRawBearingRate}
\mathbb{E} \! \left[ x^3_1 \right] = &-\frac{3}{4} \, &&{\Delta v_0}^{3} \cos\left(\mu_{0} - 3 \,
\sigma_{30}\right) e^{\left(-3 \, \mu_{3} - \frac{1}{2} \, \sigma_{00} + \frac{9}{2} \, \sigma_{33}\right)} -
\frac{3}{4} \, {\Delta v_0} {\Delta v_1}^{2} \cos\left(\mu_{0} - 3 \, \sigma_{30}\right) e^{\left(-3 \, \mu_{3} -
\frac{1}{2} \, \sigma_{00} + \frac{9}{2} \, \sigma_{33}\right)} \nonumber \\
&- \frac{1}{4} \, &&{\Delta v_0}^{3} \cos\left(3 \, \mu_{0} - 9 \, \sigma_{30}\right) e^{\left(-3 \, \mu_{3} - \frac{9}{2}
\, \sigma_{00} + \frac{9}{2} \, \sigma_{33}\right)} + \frac{3}{4} \, {\Delta v_0} {\Delta v_1}^{2} \cos\left(3 \,
\mu_{0} - 9 \, \sigma_{30}\right) e^{\left(-3 \, \mu_{3} - \frac{9}{2} \, \sigma_{00} + \frac{9}{2} \,
\sigma_{33}\right)} \nonumber \\
&+ \frac{3}{4} \, &&{\Delta v_0}^{2} {\Delta v_1} e^{\left(-3 \, \mu_{3} - \frac{9}{2} \,
\sigma_{00} + \frac{9}{2} \, \sigma_{33}\right)} \sin\left(3 \, \mu_{0} - 9 \, \sigma_{30}\right) - \frac{1}{4} \,
{\Delta v_1}^{3} e^{\left(-3 \, \mu_{3} - \frac{9}{2} \, \sigma_{00} + \frac{9}{2} \, \sigma_{33}\right)} \sin\left(3
\, \mu_{0} - 9 \, \sigma_{30}\right) \nonumber \\
&+ \frac{3}{4} \, &&{\Delta v_0}^{2} {\Delta v_1} e^{\left(-3 \, \mu_{3} -
\frac{1}{2} \, \sigma_{00} + \frac{9}{2} \, \sigma_{33}\right)} \sin\left(\mu_{0} - 3 \, \sigma_{30}\right) +
\frac{3}{4} \, {\Delta v_1}^{3} e^{\left(-3 \, \mu_{3} - \frac{1}{2} \, \sigma_{00} + \frac{9}{2} \,
\sigma_{33}\right)} \sin\left(\mu_{0} - 3 \, \sigma_{30}\right) \nonumber \\
&+ \frac{3}{2} \, &&{\Delta v_0}^{2} {\left(\mu_{1} - 2
\, \sigma_{31}\right)} e^{\left(-2 \, \mu_{3} + 2 \, \sigma_{33}\right)} + \frac{3}{2} \, {\Delta v_1}^{2}
{\left(\mu_{1} - 2 \, \sigma_{31}\right)} e^{\left(-2 \, \mu_{3} + 2 \, \sigma_{33}\right)} \nonumber \\
&+ \frac{3}{2} \, &&{}
{\left({\left(\mu_{1} - 2 \, \sigma_{31}\right)} \cos\left(2 \, \mu_{0} - 4 \, \sigma_{30}\right) e^{\left(-2 \, \mu_{3}
- 2 \, \sigma_{00} + 2 \, \sigma_{33}\right)} - 2 \, \sigma_{10} e^{\left(-2 \, \mu_{3} - 2 \, \sigma_{00} + 2 \,
\sigma_{33}\right)} \sin\left(2 \, \mu_{0} - 4 \, \sigma_{30}\right)\right)} {\Delta v_0}^{2} \nonumber \\
&- 3 \, &&{} {\left(2 \,
\sigma_{10} \cos\left(2 \, \mu_{0} - 4 \, \sigma_{30}\right) e^{\left(-2 \, \mu_{3} - 2 \, \sigma_{00} + 2 \,
\sigma_{33}\right)} + {\left(\mu_{1} - 2 \, \sigma_{31}\right)} e^{\left(-2 \, \mu_{3} - 2 \, \sigma_{00} + 2 \,
\sigma_{33}\right)} \sin\left(2 \, \mu_{0} - 4 \, \sigma_{30}\right)\right)} {\Delta v_0} {\Delta v_1} \nonumber \\
&- \frac{3}{2}
\, &&{} {\left({\left(\mu_{1} - 2 \, \sigma_{31}\right)} \cos\left(2 \, \mu_{0} - 4 \, \sigma_{30}\right) e^{\left(-2 \,
\mu_{3} - 2 \, \sigma_{00} + 2 \, \sigma_{33}\right)} - 2 \, \sigma_{10} e^{\left(-2 \, \mu_{3} - 2 \, \sigma_{00} + 2
\, \sigma_{33}\right)} \sin\left(2 \, \mu_{0} - 4 \, \sigma_{30}\right)\right)} {\Delta v_1}^{2} \nonumber \\
&- 3 \, &&{}
{\left(\sigma_{11} \cos\left(\mu_{0} - \sigma_{30}\right) e^{\left(-\mu_{3} - \frac{1}{2} \, \sigma_{00} + \frac{1}{2}
\, \sigma_{33}\right)} + {\left({\left(\mu_{1} - \sigma_{31}\right)} \cos\left(\mu_{0} - \sigma_{30}\right)
e^{\left(-\mu_{3} - \frac{1}{2} \, \sigma_{00} + \frac{1}{2} \, \sigma_{33}\right)} \right.} \right.} \nonumber \\
&{} &&- {\left. {\left.\sigma_{10} e^{\left(-\mu_{3} -
\frac{1}{2} \, \sigma_{00} + \frac{1}{2} \, \sigma_{33}\right)} \sin\left(\mu_{0} - \sigma_{30}\right)\right)}
{\left(\mu_{1} + \sigma_{10} + \sigma_{31}\right)}\right)} {\Delta v_0} \nonumber \\
&+ 3 \, &&{} {\left(\sigma_{11} e^{\left(-\mu_{3} -
\frac{1}{2} \, \sigma_{00} + \frac{1}{2} \, \sigma_{33}\right)} \sin\left(\mu_{0} - \sigma_{30}\right) +
{\left(\sigma_{10} \cos\left(\mu_{0} - \sigma_{30}\right) e^{\left(-\mu_{3} - \frac{1}{2} \, \sigma_{00} + \frac{1}{2}
\, \sigma_{33}\right)} \right.} \right.} \nonumber \\
&{} &&+ {\left. {\left. {\left(\mu_{1} - \sigma_{31}\right)} e^{\left(-\mu_{3} - \frac{1}{2} \, \sigma_{00} +
\frac{1}{2} \, \sigma_{33}\right)} \sin\left(\mu_{0} - \sigma_{30}\right)\right)} {\left(\mu_{1} - \sigma_{10} +
\sigma_{31}\right)}\right)} {\Delta v_1} + {\left(\mu_{1}^{2} + \sigma_{11}\right)} \mu_{1} + 2 \, \mu_{1} \sigma_{11} \text{.}
\end{alignat}
\end{figure*}
\end{appendices}

\bibliography{IEEEabrv,ukf.bib}
\bibliographystyle{unsrt}

\section*{Biography Section}
\vskip -2\baselineskip plus -1fil
\begin{IEEEbiographynophoto}{Athena Helena Xiourouppa}
Having completed a Bachelor of Mathematical Sciences (Advanced) at the University of Adelaide, Athena is pursuing
postgraduate research in Statistics and Applied Mathematics via a Master of Philosophy. Her studies are supported by
Acacia Systems. Athena has proficiency in statistical modelling, simulation, and dynamical
systems.

Outside of her work and study, Athena holds a position on the Women in Science, Technology, Engineering, and Mathematics
Society (WISTEMS) committee at the University of Adelaide. This allows her to meaningfully apply her experience in
academia and industry to inspire women and other minorities in STEM.
\end{IEEEbiographynophoto}
\vskip -2\baselineskip plus -1fil
\begin{IEEEbiographynophoto}{Dmitry Mikhin} Dmitry started his research career in early 90' working on sound propagation
    modelling in the ocean using ray acoustics, and later, parabolic equation methods. Since then he worked in areas as
    varied as acoustical modelling, infrared propagation in the atmosphere, flight planning and route search, asset
    optimisation and planning in mining, agricultural robotics, and lately, estimation, tracking, and data fusion.

    When wearing his second hat, Dmitry is a professional software developer combining his research and coding skills to
    bridge the vast badlands separating the academic science and industry applications.
\end{IEEEbiographynophoto}
\vskip -2\baselineskip plus -1fil
\begin{IEEEbiographynophoto}{Melissa Humphries}
Dr. Melissa Humphries is a statistician --- she creates, and applies, analytical tools that make sense of our world.
Working in areas like forensic science, health and psychology, the goal is always to increase efficiency, accuracy, and
transparency. Building bridges between machines and experts, Melissa's work aims to support experts in making decisions
in an explainable way.

Working at the interface between statistics and AI, Melissa's research draws meaning out of complex systems by
leveraging the explainable components of statistical and physical models and elevating them using novel techniques. She
strongly advocates the integration of the end-user from the outset of all research and is committed to finding impactful
ways to implement science.

Melissa is also a passionate advocate of equity across all areas of her work and life.
\end{IEEEbiographynophoto}
\vskip -2\baselineskip plus -1fil
\begin{IEEEbiographynophoto}{John Maclean}
I use Bayesian methods to attack inference problems in which one has access to forecasts from a noisy, incorrect,
chaotic model, and also access to noisy, incomplete data. The key question is how to use forecasts and data together
in order to infer model states or parameters. The methods I use work very well in low dimensional problems, and my
research attacks the curse of dimensionality found when employing high dimensional data.
\end{IEEEbiographynophoto}

\end{document}